\documentclass[12pt]{amsart}
\usepackage[margin=1.2in]{geometry}
\usepackage[sorting = none, backend = bibtex, style=numeric-comp]{biblatex}
\title[Identification and Recovery from Partial Trajectories]{Identification and Recovery of Linear Dynamics from Partial Trajectories}

\author[Le Gong]{Le Gong}

\date{\vspace{-5ex}}
\usepackage{amsthm, amsfonts, mathrsfs}
\usepackage{ dsfont }
\usepackage{amssymb}
\usepackage{enumitem}
\usepackage{graphicx}
\usepackage{subfig}

\usepackage{float}
\usepackage{bbm}
\usepackage{amsmath}
\usepackage{comment}
\usepackage{listings}

\usepackage{diagbox}
\usepackage[dvipsnames]{xcolor}
\usepackage{algorithm}
\usepackage{booktabs}
\usepackage{algorithmicx}
\usepackage[noend]{algpseudocode}
\usepackage{tikz}
\allowdisplaybreaks

\usepackage{hyperref}
\hypersetup{
    colorlinks,
    citecolor=blue,
    filecolor=blue,
    linkcolor=blue,
    urlcolor=blue
}
\numberwithin{equation}{section}
\newtheorem{theorem}[equation]{Theorem}
\newtheorem{proposition}[equation]{Proposition}
\newtheorem{lemma}[equation]{Lemma}

\newtheorem{corollary}[equation]{Corollary}
\newtheorem{definition}[equation]{Definition}
\newtheorem{remark}[equation]{Remark}

\newtheorem{assume}{Assumption}

\newcommand{\R}{\mathbb{R}}

\newcommand{\rank}{{\rm rank\,}}

\usepackage[foot]{amsaddr}

\definecolor{aacolor}{rgb}{0.05, 0.75, 1}

\newcommand{\Ocal}{\mathcal O}
\newcommand{\range}{\operatorname{range}}

\newcommand{\set}[1]{\left\{#1\right\}}
\newcommand{\paren}[1]{\left(#1\right)}
\newcommand{\Omegaall}{\boldsymbol\Omega}
\newcommand{\eps}{\varepsilon}
\newcommand{\pinv}{\dagger}

\newcommand{\sobs}{s_{\Omega}}

\begin{document}
\date{}
\address{\textrm{(Le Gong)}
	School of Mathematical Sciences,
Beijing University of Posts and Telecommunications, Beijing 100876, China}
\email{gongle@bupt.edu.cn}
\thanks{Corresponding author: Le Gong}
\keywords{System identification, dynamical sampling, partial trajectories,
low-rank matrix recovery}
\subjclass[2020]{93B30, 93B07, 94A20, 15A29}

\begin{abstract}
We study the joint recovery of an unknown system matrix
$A\in\mathbb R^{n\times n}$ and a rank-$r$ initial-state matrix
$X_0\in\mathbb R^{n\times m}$ from partial observations of the state
matrices $A^tX_0$, for $t=0,\ldots,T$. Each trajectory is observed at a fixed set of state coordinates, which may differ across trajectories. We
derive necessary conditions for identifiability, including obstructions
arising from incomplete spatial coverage and insufficient dynamical
span, as well as lower bounds on the number of observations. We then
characterize the Jacobian of the measurement map, account for the
intrinsic symmetry of the low-rank factorization, and establish a
rank-certificate principle yielding generic local identifiability. When
the trajectories whose initial states form a basis of
$\operatorname{range}(X_0)$ are fully observed, we obtain necessary and
sufficient rank conditions for global recovery, together with explicit
reconstruction formulas. We formulate the joint recovery problem as a
nonlinear least-squares problem over the system matrix $A$ and the low-rank factors of $X_0$, and solve it using Adam. Numerical
experiments on synthetic and real-world data evaluate the recovery
performance under different spatial sampling rates and observation
horizons.
\end{abstract}

\maketitle
\section{Introduction}

Learning dynamical systems from incomplete observations is a fundamental problem in data-driven modeling. It arises in applications such as fluid dynamics, environmental transport, network diffusion, and epidemic propagation, where the underlying state is often high-dimensional, but measurements are available only at a limited number of spatial locations due to limited sensing resources, communication constraints, or restricted physical access. In such settings, the data are often temporally rich but spatially
incomplete. A central question is therefore whether temporal evolution can compensate for limited spatial sensing, enabling the recovery of both the underlying states and the governing dynamics. This question motivates the development of methodologies that exploit temporal information to overcome the limitations of sparse spatial observations.

Inspired by the spatial sampling framework in~\cite{lu2009spatial},
dynamical sampling was introduced in~\cite{aldroubi2013dynamical} and has
since emerged as a mathematical framework for studying the recovery of
dynamical systems from spatiotemporal measurements. In the classical setting, a signal
$x_t$ evolves in a Hilbert space $\mathcal{H}$ according to
\begin{equation}
\label{eq:intro-classical-dynamics}
x_{t+1}=Ax_t,
\end{equation}
where the evolution operator $A$ is assumed to be known. Given measurements
of the form
\[
Sx_t=SA^tx_0,
\qquad
t=0,\ldots,T,
\]
where $S$ denotes a sampling operator, the objective is to recover the unknown initial state $x_0$. The temporal evolution can propagate information from unobserved coordinates to observed ones, allowing a sequence of spatially sparse measurements to contain more information than a single static observation. In this way, dynamical sampling replaces dense spatial sensing at one time instant with a joint space--time sampling strategy.

In many applications, however, the governing dynamics are themselves unknown and must be inferred from data. This leads naturally to the problem of system identification. System identification aims to construct dynamical models from observed
input--output or state data~\cite{ljung1999system}. Many standard formulations of linear system identification estimate $A$ from fully observed state data, for example, using state-transition pairs $(x_t,x_{t+1})$. When the states are only partially observed, however, such state pairs are no longer directly available, and the system matrix cannot be estimated independently of the unobserved states. The evolution operator and the underlying trajectories must therefore be recovered jointly. The resulting problem is a coupled inverse problem that connects system identification with dynamical sampling, raising additional challenges in identifiability and reconstruction.
In this work, we consider multiple trajectories governed by a common
unknown matrix $A\in\mathbb R^{n\times n}$. Let
\[
X_0
=
\begin{bmatrix}
x_0^{(1)} & x_0^{(2)} & \cdots & x_0^{(m)}
\end{bmatrix}
\in\mathbb R^{n\times m}
\]
be the unknown matrix of initial states. The corresponding trajectories are
\[
x_t^{(i)}=A^tx_0^{(i)},
\qquad
i=1,\ldots,m,\quad t=0,\ldots,T.
\]
For each trajectory, we observe only a subset of its spatial coordinates:
\begin{equation}
\label{eq:intro-measurement-model}
y_t^{(i)}
=
S_i x_t^{(i)}
=
S_iA^tx_0^{(i)},
\qquad
i=1,\ldots,m,\quad t=0,\ldots,T,
\end{equation}
where $S_i$ is a row-selection sampling operator. The sampling operator may
vary across trajectories but is fixed over time for each individual
trajectory.

We assume that the initial-state matrix is low-rank:
\begin{equation}
\label{eq:intro-low-rank}
\operatorname{rank}(X_0)=r\ll\min\{n,m\},
\qquad
X_0=UV^\top,
\end{equation}
where $U\in\mathbb R^{n\times r}$ and
$V\in\mathbb R^{m\times r}$. This assumption models the situation in which the initial states belong to a common low-dimensional subspace. This low-rank structure couples the
trajectories, allowing information to be shared across different trajectories under sparse spatial observations.

Given the measurement model~\eqref{eq:intro-measurement-model}, our
central question is under what conditions the observations uniquely determine the pair $(A,X_0)$ and how this pair can be recovered in practice. We address this question from both theoretical and computational perspectives by analyzing local identifiability, establishing global
recovery under fully observed basis trajectories, and developing a
numerical method for recovery from partial trajectory data.

\subsection{Main contributions}
Our contributions are summarized as follows.
\begin{enumerate}[leftmargin=2em]
\item We identify spatial and dynamical obstructions to joint recovery
and derive measurement
requirements.
\item We characterize the measurement Jacobian, identify the
$r^2$-dimensional kernel induced by the nonuniqueness of the low-rank
factorization, and show that, for a fixed sampling configuration, a
single rank certificate yields generic local identifiability.
\item When the trajectories initialized from a basis of
$\operatorname{range}(X_0)$ are fully observed, we obtain necessary and
sufficient rank conditions for global recovery, together with explicit
reconstruction formulas for $A$ and $X_0$. We also derive generic
conditions on the observation horizon and the number of coordinates
observed for each remaining trajectory.
\item We formulate joint recovery as a nonlinear least-squares problem
over the system matrix $A$ and the low-rank factors of $X_0$, and solve it by direct joint optimization using Adam. Numerical
experiments on synthetic and real-world data investigate the effects of
the spatial sampling rate and observation horizon and compare the
proposed approach with independent factorized low-rank matrix
completion.
\end{enumerate}

\subsection{Related work}
\subsubsection{Dynamical sampling with known evolution operators} 
Classical dynamical sampling studies the recovery of an unknown initial
state $x_0$ from spatially incomplete measurements of its temporal evolution
under a known operator $A$
\cite{aldroubi2013dynamical,
aldroubi2017dynamical,
aldroubi2015exact,
aguilera2025optimal,
aldroubi2019continuous,
diazmartin2023error,
zhang2017dynamical}.
The literature primarily focuses on deterministic sampling
schemes and investigates the existence and stability of recovery, as
well as characterizations of admissible sampling configurations.
Randomized space--time sampling has subsequently been studied for
graph-based dynamical systems
\cite{huang2024robust,huang2025random}.

Dynamical sampling has also been extended to source-driven systems of
the form
\[
x_{t+1}=Ax_t+w_t,
\]
where $w_t$ represents an unknown input or forcing term. In this extended setting, the objective is to recover the source term or jointly recover the source and the initial state, while the
evolution operator is generally assumed to be known \cite{aldroubi2025periodic,
aldroubi2024characterization,
aldroubi2024spatially,
ruchi2026frames, gong2026randomized}. Related inverse-source problems for physical fields governed by partial differential equations have also been studied using spatiotemporal sensor
measurements \cite{murraybruce2015estimating,
murraybruce2017sampling}.

Our setting differs from these dynamical sampling problems in two key
respects. First, the evolution operator $A$ is unknown and must be learned
from the observations. Second, we consider multiple source-free trajectories whose initial states share a low-rank structure. Our goal is to jointly
recover the evolution operator $A$ and the initial-state matrix $X_0$ from partial spatiotemporal observations.

\subsubsection{System identification and operator learning}
In the fully observed linear setting, the system matrix can be estimated from consecutive state snapshots using least squares, a classical approach in system identification~\cite{ljung1999system}. Related data-driven methods, such as dynamic mode decomposition, approximate a linear evolution operator from snapshot pairs~\cite{tu2014dynamic}. Operator inference adopts a similar regression-based perspective to infer reduced-order operators from state trajectory data~\cite{peherstorfer2016operator}.

The recovery of an unknown evolution operator has also been studied within the framework of dynamical sampling. Tang~\cite{tang2017system} considered the simultaneous recovery of an unknown convolution operator and an unknown initial state from spatially subsampled observations. Under structural assumptions on the convolution filter and the sampling pattern, both quantities can be recovered using Prony-type methods.  

More recently, \cite{kuemmerle2024learning} studied the recovery of a transition
operator from sparse observations of the entries of its matrix powers
\[
A,A^2,\ldots,A^T.
\]
They used a block-Hankel embedding to reformulate the nonlinear recovery
problem as a structured low-rank matrix-completion problem and established
recovery guarantees under random space--time sampling.

Our setting differs from these works in several important respects. In contrast to the setting considered in~\cite{tang2017system}, where the evolution operator has a highly structured convolutional form, we allow $A$ to be a general matrix without imposing such structural assumptions. Moreover, rather than directly observing entries of the matrix powers $A^t$ as in \cite{kuemmerle2024learning}, we observe partial measurements of their action on unknown initial states,
\[
y_t^{(i)}=S_iA^tx_0^{(i)},
\]
where both the system matrix $A$ and the initial states $x_0^{(i)}$ are
unknown. We consider multiple trajectories whose initial states form a
low-rank matrix and study the joint recovery of $A$ and $X_0$ from
trajectory-dependent partial observations. In addition, each trajectory uses a fixed spatial sampling set throughout the observation interval, whereas the random space--time model in~\cite{kuemmerle2024learning} allows the sampled entries to vary with time.

\subsection{Paper organization}

Section~\ref{setting} introduces the observation model and derives
necessary conditions for identifiability. Section~\ref{main-result}
characterizes the measurement Jacobian and establishes a rank-based criterion for generic local identifiability. Section~\ref{sec:global}
develops exact global recovery results when the trajectories initialized
from a basis of $\operatorname{range}(X_0)$ are fully observed.
Section~\ref{sec:numerical-experiments} presents the direct joint
least-squares method based on Adam and reports numerical experiments on
synthetic and real-world data. The proofs of the main theoretical
results are collected in the appendices.

\section{Problem Setting}
\label{setting}
Fix an integer $T\geq1$ and let $A \in \mathbb{R}^{n\times n}$ be an unknown linear operator and let
\[
X_0 = \bigl[x_0^{(1)}, x_0^{(2)}, \dots, x_0^{(m)}\bigr] \in \mathbb{R}^{n\times m}
\]
denote an unknown matrix of initial states, where each column $x_0^{(i)} \in \mathbb{R}^n$ generates one trajectory. We assume that $X_0$ is low-rank, i.e.,
\[
\operatorname{rank}(X_0) = r,\quad 1\leq r\leq\min\{n,m\}, \qquad X_0 = U V^\top,
\]
with full-column-rank factors $U \in \mathbb{R}^{n\times r}$ and $V \in \mathbb{R}^{m\times r}$. The low-rank assumption is motivated by both practical and theoretical considerations. In many applications, high-dimensional spatiotemporal systems often exhibit low-dimensional dominant behaviors, meaning multiple observed trajectories naturally arise from a shared, lower-dimensional latent subspace. From a sampling perspective, the low-rank structure introduces redundancy across trajectories. If each initial state were treated independently, recovering the collection $\{x_0^{(i)}\}_{i=1}^m$ would generally require a large number of spatial observations for every trajectory. By exploiting the shared low-dimensional structure encoded in $X_0$, information obtained from one trajectory can assist in recovering others, potentially reducing the overall sampling burden.

The trajectories evolve according to the linear dynamics
\[
x_t^{(i)} = A^t x_0^{(i)}, \qquad t = 0,1,\dots,T,\quad i = 1,\dots,m.
\]
For each trajectory $i$, we observe the state through a row-selection sampling operator
\[
S_i \in \mathbb{R}^{m_{s,i}\times n}, \qquad i = 1,\dots,m,
\]
where each row of $S_i$ is a canonical basis vector in $\mathbb{R}^n$. Equivalently, there exists an index set $\Omega_i \subset \{1,\dots,n\}$ such that $S_i$ extracts the coordinates of $x \in \mathbb{R}^n$ indexed by $\Omega_i$, i.e.,
\[
S_i x = x|_{\Omega_i}.
\]
The set $\Omega_i$ is fixed over time, and $m_{s,i}=|\Omega_i|$.
Thus the measurements are given by
\[
y_t^{(i)} = S_i x_t^{(i)} = S_i A^t x_0^{(i)}, \qquad t = 0,1,\dots,T.
\]

Given the sampling operators $\{S_i\}_{i=1}^m$ and the noiseless data
$y_t^{(i)}$ for $i=1,\ldots,m$ and $t=0,\ldots,T$, we aim to reconstruct
the unknown pair $(A,X_0)$.

\subsection{Observation map and intrinsic parameter space}
For a positive integer $d$, write
\[
[d]:=\{1,\ldots,d\}.
\]
We encode the spatial sampling pattern by
\[
\Omegaall
:=
\{(k,i)\in[n]\times[m]:k\in\Omega_i\}.
\]
Thus, $(k,i)\in\Omegaall$ means that the $k$th coordinate of the
$i$th trajectory is observed. The number of observed
coordinate--trajectory pairs and the total number of measurements are, respectively,
\[
\sobs
:=
|\Omegaall|
=
\sum_{i=1}^m m_{s,i},
\qquad
p
:=
(T+1)\sobs.
\]

For $X\in\mathbb R^{n\times m}$, let
\[
P_{\Omegaall}(X)\in\mathbb R^{\sobs}
\]
denote the vector consisting of the entries $X_{ki}$ with
$(k,i)\in\Omegaall$, ordered first by trajectory index and then
according to the row order of $S_i$.

Define the measurement map
\begin{equation}\label{eq:intrinsic-map}
 \Psi(A,X)=\big(P_{\Omegaall}(A^tX)\big)_{t=0}^T\in\mathbb R^p,
 \qquad (A,X)\in\mathbb R^{n\times n}\times\mathcal M_r,
\end{equation}
where $\mathcal M_r=\{X\in\mathbb R^{n\times m}:\operatorname{rank}X=r\}$.
For $X=UV^\top\in\mathcal M_r$, the factors
$U\in\mathbb R^{n\times r}$ and $V\in\mathbb R^{m\times r}$ have full
column rank. However, the factorization is not unique. Indeed, for every invertible matrix
$Q\in\mathrm{GL}(r)$,
\[
(UQ)(VQ^{-\top})^\top=UV^\top.
\]
Thus, although $(U,V)$ contains $nr+mr$ parameters, this
factorization freedom accounts for $r^2$ redundant degrees of freedom. Consequently,
\begin{equation}
\label{eq:rank-manifold-dim}
\dim\mathcal M_r
=
nr+mr-r^2
=
r(n+m-r).
\end{equation}
The tangent space of $\mathcal M_r$ at $X=UV^\top$ is
\begin{equation}
\label{eq:tangent-rank-manifold}
T_X\mathcal M_r
=
\left\{
\Delta U V^\top+U\Delta V^\top:
\Delta U\in\mathbb R^{n\times r},\
\Delta V\in\mathbb R^{m\times r}
\right\}.
\end{equation}
Let
\[
D\Psi(A,X_0):
\mathbb R^{n\times n}\times T_{X_0}\mathcal M_r
\longrightarrow \mathbb R^p
\]
denote the differential of the measurement map at $(A,X_0)$.

\begin{definition}\label{def:identifiability}
A pair $(A,X_0)$ is globally identifiable if
$\Psi(\widetilde A,\widetilde X_0)=\Psi(A,X_0)$ and
$\operatorname{rank}\widetilde X_0=r$ imply
$(\widetilde A,\widetilde X_0)=(A,X_0)$. It is locally identifiable
if this implication holds in a neighborhood of $(A,X_0)$, and
infinitesimally identifiable if $D\Psi(A,X_0)$ is injective on
$\mathbb R^{n\times n}\times T_{X_0}\mathcal M_r$.
\end{definition}

We will also use the following notion of genericity. A property holds generically if its failure is contained in the zero
set of a polynomial that is not identically zero in the specified
parameter coordinates. This exceptional set has Lebesgue measure zero in those coordinates.

\subsection{Spatial coverage and dynamical span}
A fundamental question is whether the pair $(A,X_0)$ is uniquely
determined by the observations. We first identify two obstructions to
identifiability, arising from incomplete spatial coverage and
insufficient dynamical span, and then formulate the corresponding
necessary assumptions for the subsequent analysis.

\begin{proposition}
\label{prop:nonuniqueness-fixed-S}
Let
$
\mathcal S
:=
\begin{bmatrix}
S_1\\
\vdots\\
S_m
\end{bmatrix}
\in\mathbb R^{s_\Omega\times n}$
be the aggregated sampling operator and define
\[
 G_{\mathcal S}:=\{R\in\mathrm{GL}(n):S_iR=S_i
 \text{ for every }i\in[m]\}.
\]
For every $R\in G_{\mathcal S}$, the pair
\begin{equation}
\label{eq:similarity-ambiguity}
  \widetilde A=RAR^{-1},
  \qquad
  \widetilde X_0=RX_0
\end{equation}
produces the same observations as $(A,X_0)$ at every time. If
$\ker\mathcal S\neq\{0\}$, then $G_{\mathcal S}$ contains nonidentity
matrices arbitrarily close to $I_n$.

\end{proposition}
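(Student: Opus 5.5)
The first claim is a direct computation. For $R\in G_{\mathcal S}$, set $\widetilde A=RAR^{-1}$ and $\widetilde X_0=RX_0$. Then
\[
\widetilde A^t\widetilde X_0=RA^tR^{-1}RX_0=RA^tX_0
\]
for every $t\ge 0$. It follows that
\[
S_i\widetilde A^t\widetilde x_0^{(i)}=S_iRA^tx_0^{(i)}=S_iA^tx_0^{(i)},
\]
using $S_iR=S_i$, so $\Psi(\widetilde A,\widetilde X_0)=\Psi(A,X_0)$. We should also record that $\operatorname{rank}\widetilde X_0=\operatorname{rank}X_0=r$, since $R$ is invertible. This makes $(\widetilde A,\widetilde X_0)$ a genuine competitor in the sense of Definition~\ref{def:identifiability}.

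For the second claim, the plan is to build an explicit one-parameter family. Note that $S_iR=S_i$ for all $i$ is equivalent to $\mathcal S(R-I_n)=0$, that is, $\range(R-I_n)\subseteq\ker\mathcal S$. Choose a nonzero $v\in\ker\mathcal S$, which exists by hypothesis, and any nonzero $w\in\mathbb R^n$. Define
\[
R_\eps:=I_n+\eps\,vw^\top .
\]
Then $\mathcal S(R_\eps-I_n)=\eps(\mathcal Sv)w^\top=0$. By the matrix determinant lemma, $\det R_\eps=1+\eps\,w^\top v$, which is nonzero for all sufficiently small $|\eps|$, and for every $\eps$ if we take $w\perp v$. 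So $R_\eps\in G_{\mathcal S}$, $R_\eps\neq I_n$ for $\eps\neq0$, and $\|R_\eps-I_n\|=|\eps|\,\|v\|\,\|w\|\to0$ as $\eps\to0$.

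There is no real obstacle here. The only point needing care is invertibility of $R_\eps$, which the determinant lemma settles; alternatively one can note that $\mathrm{GL}(n)$ is open. One may remark afterward that the induced pairs $(R_\eps AR_\eps^{-1},R_\eps X_0)$ need not differ from $(A,X_0)$ for a particular choice of $w$. Still, the statement asserts only the existence of nonidentity elements of $G_{\mathcal S}$ near $I_n$, so the proof is complete without this.
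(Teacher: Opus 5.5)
Your proof is correct and follows the paper's argument essentially step for step: the same direct computation $S_i(RAR^{-1})^tRx_0^{(i)}=S_iRA^tx_0^{(i)}=S_iA^tx_0^{(i)}$, and the same rank-one family $R_\eps=I_n+\eps vw^\top$ with $0\neq v\in\ker\mathcal S$, which is invertible for small $\eps$ because $\det R_\eps=1+\eps w^\top v$. The only addition in the paper is that it chooses $w$ with $w^\top X_0\neq0$ (possible since $r\geq1$), so that $R_\eps X_0\neq X_0$ and the induced pair is genuinely distinct; this is what the subsequent non-identifiability remark needs, and it is exactly the point you flagged as lying outside the statement itself.
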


For row-selection operators, $\ker\mathcal S=\{0\}$ is equivalent to
$\bigcup_i\Omega_i=[n]$. If this coverage condition fails,
Proposition~\ref{prop:nonuniqueness-fixed-S} shows that every pair
$(A,X_0)$ has arbitrarily close, distinct pairs producing exactly the
same observations at all times. Consequently, $(A,X_0)$ is neither locally nor globally identifiable. Full spatial coverage is therefore necessary for identifiability, which motivates the following assumption.

\begin{assume}
\label{ass:full-coverage}
The sampling sets cover all coordinates:
\[
\bigcup_{i=1}^m\Omega_i=[n].
\]
\end{assume}

\begin{remark}
When $m=1$, Assumption~\ref{ass:full-coverage} reduces to
\[
\Omega_1=[n].
\]
Thus, the single trajectory must be fully observed at every time.
Temporal observations from a fixed proper subset of coordinates cannot
compensate for incomplete spatial coverage when both $A$ and the initial state are unknown.
\end{remark}

Assumption~\ref{ass:full-coverage} is necessary but does not by itself
guarantee identifiability. Even with complete spatial observation, the
states used in the transitions must span the entire state space.
Define
\[
 \mathcal K_{T-1}(A,X_0)
 =\operatorname{span}\{A^kx_0^{(i)}:i\in[m],\ k=0,\ldots,T-1\}.
\]
The corresponding matrix is
\begin{equation}\label{eq:krylov-minus-general}
 K_-(A,X_0)=[X_0,AX_0,\ldots,A^{T-1}X_0]\in\mathbb R^{n\times mT}.
\end{equation}

\begin{proposition} 
\label{prop:krylov-obstruction}
Let $s=\rank K_-(A,X_0)$.  If $s<n$, then there exists a nonzero matrix $E$ satisfying
\[
EK_-(A,X_0)=0.
\]
Consequently, $\widetilde A=A+E\neq A$ generates the same trajectories
through time $T$:
\[
\widetilde A^tX_0=A^tX_0,
\qquad t=0,\ldots,T.
\]
Thus, $A$ cannot be uniquely determined, even if every trajectory is
fully observed.
\end{proposition}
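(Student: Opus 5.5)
The argument has three steps: build $E$ from the left null space of $K_-(A,X_0)$, show by induction on $t$ that $\widetilde A=A+E$ produces the same states as $A$ through time $T$, and note that $E\neq 0$ gives $\widetilde A\neq A$.

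\emph{Constructing $E$.} Since $s=\rank K_-(A,X_0)<n$, the column space $\range K_-(A,X_0)=\mathcal K_{T-1}(A,X_0)$ is a proper subspace of $\mathbb R^n$. Its orthogonal complement therefore contains a nonzero vector $w$, so $w^\top K_-(A,X_0)=0$. For any nonzero $v\in\mathbb R^n$, set $E:=vw^\top$. Then $E\neq 0$, and
\[
EK_-(A,X_0)=v\,\bigl(w^\top K_-(A,X_0)\bigr)=0.
\]
Equivalently, $EA^kX_0=0$ for $k=0,\ldots,T-1$, because these are exactly the column blocks of $K_-(A,X_0)$. In fact every such $E$ has the form $E=ZW^\top$, where the columns of $W$ span $\ker K_-(A,X_0)^\top$; this gives an $n(n-s)$-dimensional family of valid perturbations.

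\emph{Induction on $t$.} We prove $\widetilde A^tX_0=A^tX_0$ for $t=0,\ldots,T$. The case $t=0$ is trivial. Suppose the identity holds for some $t\le T-1$. Then
\[
\widetilde A^{t+1}X_0=(A+E)\widetilde A^tX_0=(A+E)A^tX_0=A^{t+1}X_0+EA^tX_0=A^{t+1}X_0,
\]
where the last equality uses $EA^tX_0=0$, valid since $t\le T-1$. The induction therefore stops exactly at $t=T$, which is why the Krylov matrix only runs up to power $T-1$.

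Since the full states $A^tX_0$ coincide for $t=0,\ldots,T$, so do the observations $S_iA^tx_0^{(i)}$, for any choice of sampling operators, including full observation $S_i=I_n$. Together with $\widetilde A\neq A$, this shows $A$ is not determined by the data. No step is a real obstacle; the only point needing care is that the induction uses the annihilation condition only up to index $T-1$.
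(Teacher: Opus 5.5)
Your proof is correct and follows essentially the same route as the paper's: a nonzero left annihilator $E$ of $K_-(A,X_0)$, followed by induction on $t$ using $EA^tX_0=0$ for $t\le T-1$. Your explicit rank-one choice $E=vw^\top$ and the count showing these $E$ form an $n(n-s)$-dimensional space are small refinements of the paper's bare observation that $\{E:EK_-=0\}\neq\{0\}$.
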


The following assumption records this necessary condition.
\begin{assume}
\label{ass:full-dynamical-span}
The states generated by the trajectories at times $0,\ldots,T-1$ span
the entire state space:
\[
\mathcal K_{T-1}(A,X_0)=\mathbb R^n,
\qquad\text{equivalently,}\qquad
\rank K_-(A,X_0)=n.
\]
\end{assume}

Since $K_-(A,X_0)$ contains $T$ blocks, each of rank at most $r$,
$
\operatorname{rank}K_-(A,X_0)\leq rT.
$
Therefore, Assumption~\ref{ass:full-dynamical-span} requires
\begin{equation}
\label{eq:rT-necessary}
rT\geq n.
\end{equation}
We likewise impose this assumption throughout the remainder of the
paper.

\subsection{Necessary measurement inequalities}
Let $B_0\in\mathbb R^{n\times r}$ be a column submatrix of $X_0$
whose columns form a basis for $\operatorname{range}(X_0)$. For each
$i\in[m]$, define the restricted observability matrix
\begin{equation}\label{eq:trajectory-observability}
 \mathcal O_{i,T}(A,B_0)=
 \begin{bmatrix}S_iB_0\\S_iAB_0\\\vdots\\S_iA^TB_0\end{bmatrix}
 \in\mathbb R^{(T+1)m_{s,i}\times r}.
\end{equation}
Note that its rank is independent of the choice of basis.

The following theorem collects several necessary conditions for
injectivity of the differential of the measurement map $\Psi$. In particular, if any one of these conditions fails, then $D\Psi(A,X_0)$ has a nontrivial kernel.

\begin{theorem}
\label{thm:necessary-laws}
Suppose $D\Psi(A,X_0)$ is injective on
$\R^{n\times n}\times T_{X_0}\mathcal M_r$.  Then
\begin{align}
&(T+1)\sobs\geq n^2+r(n+m-r),
\qquad
T\sobs\geq n^2,
\label{eq:dimension-lower}\\
&\operatorname{rank}\mathcal O_{i,T}(A,B_0)=r
\qquad\text{for every }i\in[m].
\label{eq:necessary-O}
\end{align}
Consequently,
\begin{equation}
\label{eq:combined-Q-lower}
  \sobs\geq
  \max\set{
  n,
  \left\lceil\frac{n^2}{T}\right\rceil,
  \left\lceil\frac{n^2+r(n+m-r)}{T+1}\right\rceil,
  m\left\lceil\frac{r}{T+1}\right\rceil
  }.
\end{equation}
\end{theorem}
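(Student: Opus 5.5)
The plan is to prove each inequality by exhibiting a large subspace on which $D\Psi(A,X_0)$ must act injectively, or by showing that a failing condition produces an explicit kernel vector. The differential at $(A,X_0)$ in direction $(E,\Delta)$ is
\[
D\Psi(A,X_0)[E,\Delta]=\Big(P_{\Omegaall}\big(A^t\Delta+\textstyle\sum_{j=0}^{t-1}A^jEA^{t-1-j}X_0\big)\Big)_{t=0}^T .
\]

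\textbf{First inequality in \eqref{eq:dimension-lower}.} This is a dimension count. The domain $\R^{n\times n}\times T_{X_0}\mathcal M_r$ has dimension $n^2+r(n+m-r)$ by \eqref{eq:rank-manifold-dim}, and the codomain is $\R^{p}$ with $p=(T+1)\sobs$. An injective linear map therefore forces $(T+1)\sobs\ge n^2+r(n+m-r)$.

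\textbf{Second inequality in \eqref{eq:dimension-lower}.} Restrict to directions $(E,0)$. Their images have vanishing $t=0$ block, so $E\mapsto D\Psi[E,0]$ is an injective linear map from $\R^{n\times n}$ into the $t=1,\dots,T$ blocks, which have dimension $T\sobs$. Hence $T\sobs\ge n^2$.

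\textbf{Condition \eqref{eq:necessary-O}.} Suppose $\rank\mathcal O_{i,T}(A,B_0)<r$ for some $i$. I will build a nonzero kernel direction supported on column $i$.
\begin{enumerate}[leftmargin=2em]
\item Pick $c\in\R^r\setminus\{0\}$ with $S_iA^tB_0c=0$ for all $t$, and set $u=B_0c\in\range(X_0)$, which is nonzero because $B_0$ has full column rank.
\item Let $\Delta=u e_i^\top$ and $E=0$. Writing $u=U a$, we get $\Delta=U(e_ia^\top)^\top$, which has the form $U\Delta V^\top$ and so lies in $T_{X_0}\mathcal M_r$ by \eqref{eq:tangent-rank-manifold}.
\item Then $P_{\Omegaall}(A^t\Delta)$ only sees column $i$, where it equals $S_iA^tu=0$ for every $t$. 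So $(0,\Delta)$ is a nonzero kernel vector, contradicting injectivity.
\end{enumerate}
The rank of $\mathcal O_{i,T}$ does not depend on the choice of basis $B_0$, since changing the basis multiplies on the right by an invertible matrix.

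\textbf{The bound \eqref{eq:combined-Q-lower}.} The terms $\lceil n^2/T\rceil$ and $\lceil (n^2+r(n+m-r))/(T+1)\rceil$ follow from \eqref{eq:dimension-lower} because $\sobs$ is an integer. The term $n$ comes from Assumption~\ref{ass:full-coverage}: coverage gives $\sobs\ge n$. Without that assumption, injectivity itself forces coverage, by the infinitesimal version of Proposition~\ref{prop:nonuniqueness-fixed-S}: the direction $(RA-AR,\,RX_0)$ with $R=we_k^\top$ is annihilated when no trajectory observes coordinate $k$. For the last term, $\mathcal O_{i,T}$ has $(T+1)m_{s,i}$ rows, so rank $r$ forces $m_{s,i}\ge\lceil r/(T+1)\rceil$; summing over $i$ gives $\sobs\ge m\lceil r/(T+1)\rceil$.

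The main obstacle is the coverage step of the bound. Using Assumption~\ref{ass:full-coverage} directly is trivial, but deriving coverage from injectivity alone requires checking that the similarity direction $(RA-AR,\,RX_0)$ is nonzero and lies in the stated domain. It is nonzero because $RX_0\neq0$, which holds when $w\neq 0$ and row $k$ of $X_0$ is nonzero. If row $k$ of $X_0$ is zero, I would instead choose $R=e_ke_j^\top$ for suitable $j$, or fall back on the $E$-only kernel given by Proposition~\ref{prop:krylov-obstruction}.
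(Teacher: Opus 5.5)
Your proposal is correct and is essentially the paper's proof. It uses the same dimension count and the same restriction to directions $(E,0)$. It builds the same kernel vector $(0,B_0he_i^\top)$, and you also verify that this vector lies in $T_{X_0}\mathcal M_r$, which the paper leaves implicit. The last term comes from the same row count $(T+1)m_{s,i}\ge r$ summed over $i$.

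You go beyond the paper only on the term $n$, which the paper's proof leaves to the coverage assumption. Your infinitesimal-similarity argument is a sound way to derive that term from injectivity itself, but $R=we_k^\top$ as written has the factors transposed. For $S_iR=0$ the unobserved index must sit on the left: take $R=e_ka^\top$ with $k\notin\bigcup_i\Omega_i$ and $a^\top X_0\neq0$. This is exactly your fallback $R=e_ke_j^\top$, and a nonzero row $j$ of $X_0$ always exists since $X_0\neq0$. Drop the Krylov fallback, because failed coverage does not force $\rank K_-(A,X_0)<n$.
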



\section{Jacobian characterization and generic local identifiability}
\label{main-result}

In this section, we derive the Jacobian of the factorized measurement
map and characterize the kernel induced by the nonuniqueness of the
factorization $X_0=UV^\top$. We then establish a generic rank principle
that provides a criterion for local identifiability of the pair
$(A,X_0)$.

\subsection{Perturbation analysis}
For each trajectory $i\in\{1,\dots,m\}$ and time $t\in\{0,\dots,T\}$, define
\begin{equation}\label{eq:blockPhi}
\Phi_{i,t}(A,U,V):= S_i A^t U v_i\in\mathbb{R}^{m_{s,i}},\qquad v_i := V^\top e_i\in\mathbb{R}^r,
\end{equation}
where $S_i\in\mathbb{R}^{m_{s,i}\times n}$ is a row-selection operator, and $e_i\in\mathbb{R}^m$ is the $i$-th standard basis vector.
Stack the blocks in increasing time order, and within each time in increasing trajectory order, into one vector
\[
\Phi(A,U,V) := \big(\Phi_{i,t}(A,U,V)\big)_{i=1,\dots,m;\ t=0,\dots,T}\in\mathbb{R}^{p},
\qquad
p=(T+1)\sobs.
\]

Let $(\Delta A,\Delta U,\Delta V)$ be perturbations and set $\Delta v_i := (\Delta V)^\top e_i$.
Applying the product rule together with the linearity of the Fréchet derivative to \eqref{eq:blockPhi} yields
\begin{equation}\label{eq:differential_rule}
\delta y_{i,t}
:= D\Phi_{i,t}(A,U,V)[\Delta A,\Delta U,\Delta V]
=
S_i\Big(
\delta(A^t)\,U v_i
+
A^t(\Delta U)\,v_i
+
A^t U\,\Delta v_i
\Big).
\end{equation}

For $t\ge 1$, the Fr\'echet derivative of the matrix power map $A\mapsto A^t$ is given by
\begin{equation}\label{eq:DApower}
D(A^t)[\Delta A]
=
\sum_{k=0}^{t-1} A^{t-1-k}(\Delta A)A^{k},
\qquad
D(A^0)[\Delta A]=0.
\end{equation}
Substituting \eqref{eq:DApower} into \eqref{eq:differential_rule} yields the following formula for the Jacobian operator:
\begin{equation}\label{eq:Jacobian_operator_block}
\begin{aligned}
&D\Phi_{i,t}(A,U,V)[\Delta A,\Delta U,\Delta V]\\
&\quad=S_i\left(
\sum_{k=0}^{t-1} A^{t-1-k}(\Delta A)A^kU v_i
+A^t(\Delta U)v_i+A^tU(\Delta V)^\top e_i\right).
\end{aligned}
\end{equation}
This defines the Jacobian as a linear operator mapping $(\Delta A,\Delta U,\Delta V)$ to
$\delta y_{i,t}$, without forming any large matrices explicitly.

Furthermore, to obtain an explicit Jacobian matrix $J(A,U,V):=D\Phi(A,U,V)$ satisfying
\[
\delta\Phi = J(A,U,V)
\begin{bmatrix}
\mathrm{vec}(\Delta A)\\[2pt]
\mathrm{vec}(\Delta U)\\[2pt]
\mathrm{vec}(\Delta V)
\end{bmatrix},
\]
we use the identity $\mathrm{vec}(M(\Delta A)b)=(b^\top\otimes M)\mathrm{vec}(\Delta A)$ where $\mathrm{vec}(\cdot)$ denotes column-wise vectorization and $\otimes$ denotes the Kronecker product.
Define
\[
b_{i,k}:=A^kU v_i\in\mathbb{R}^n,
\qquad
M_{i,t,k}:=S_iA^{t-1-k}\in\mathbb{R}^{m_{s,i}\times n}.
\]
Then the block row corresponding to $(i,t)$ decomposes as
\begin{equation}\label{eq:blockrow_decomp}
J^{(i,t)} = \big[\,J_A^{(i,t)}\ \ J_U^{(i,t)}\ \ J_V^{(i,t)}\,\big]
\in\mathbb{R}^{m_{s,i}\times (n^2+nr+mr)},
\end{equation}
where
\begin{align}
J_A^{(i,t)}
&=
\sum_{k=0}^{t-1}\left(b_{i,k}^\top\otimes M_{i,t,k}\right)
\in\mathbb{R}^{m_{s,i}\times n^2},
\label{eq:JA_block}
\\[4pt]
J_U^{(i,t)}
&=
\left(v_i^\top\otimes S_iA^t\right)
\in\mathbb{R}^{m_{s,i}\times (nr)},
\label{eq:JU_block}
\\[4pt]
J_V^{(i,t)}
&=
(S_iA^tU)\,(I_r\otimes e_i^\top)
\in\mathbb{R}^{m_{s,i}\times (mr)}.
\label{eq:JV_block}
\end{align}

The full Jacobian matrix is obtained by stacking the block rows $J^{(i,t)}$ for all $i=1,\dots,m$ and $t=0,\dots,T$ in the same order as $\Phi$, yielding
\[
J(A,U,V) = D\Phi(A,U,V)
=
\begin{bmatrix}
J^{(1,0)}\\ \vdots\\ J^{(m,0)}\\
J^{(1,1)}\\ \vdots\\ J^{(m,1)}\\
\vdots\\
J^{(1,T)}\\ \vdots\\ J^{(m,T)}
\end{bmatrix}.
\]
 
\subsection{Gauge kernel and injectivity}
Define the factorization map
\begin{equation}
\label{eq:pi-factorization}
\pi(A,U,V)=(A,UV^\top),
\end{equation}
so that $\pi(A,U,V)=(A,X_0)$ whenever $X_0=UV^\top$.
This map relates perturbations in the factor variables $(A,U,V)$ to
perturbations in the original variables $(A,X_0)$. The following
proposition characterizes its differential and relates injectivity of
$D\Psi$ to the kernel of $D\Phi$.

\begin{proposition}
\label{prop:intrinsic-vs-gauge}
Let $X_0=UV^\top$ with $U$ and $V$ full column rank. Then
\begin{equation}
\label{eq:Dpi}
  D\pi(A,U,V)[\Delta A,\Delta U,\Delta V]
  =\paren{\Delta A,\Delta UV^\top+U\Delta V^\top}.
\end{equation}
The image of $D\pi$ is
$\R^{n\times n}\times T_{UV^\top}\mathcal M_r$, and
\begin{equation}
\label{eq:ker-Dpi}
  \ker D\pi(A,U,V)
  =\set{(0,UH,-VH^\top):H\in\R^{r\times r}}.
\end{equation}
Furthermore, since $\Phi=\Psi\circ\pi$, the following are equivalent:
\begin{enumerate}[label=\textup{(\roman*)}]
\item $D\Psi(A,X_0)$ is injective on
$\R^{n\times n}\times T_{X_0}\mathcal M_r$;
\item $\ker D\Phi(A,U,V)
  =\set{(0,UH,-VH^\top):H\in\R^{r\times r}}.$
\end{enumerate}
\end{proposition}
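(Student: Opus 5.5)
The plan is to prove the three parts in order: the differential formula, the image and kernel of $D\pi$, and then the equivalence via the chain rule.

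\emph{Differential.} Formula \eqref{eq:Dpi} follows from the product rule applied to $(U,V)\mapsto UV^\top$; the $A$-component is the identity.

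\emph{Image.} The image of $D\pi$ is, by \eqref{eq:tangent-rank-manifold}, exactly $\R^{n\times n}\times T_{UV^\top}\mathcal M_r$, since $\Delta A$ ranges over all of $\R^{n\times n}$ independently of $(\Delta U,\Delta V)$.

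\emph{Kernel.} Suppose $\Delta A=0$ and $\Delta UV^\top+U\Delta V^\top=0$. Because $V$ has full column rank, $V^\top$ has the right inverse $V(V^\top V)^{-1}$. Multiplying the relation on the right by this right inverse gives
\[
\Delta U=-U\Delta V^\top V(V^\top V)^{-1}=UH,\qquad H:=-\Delta V^\top V(V^\top V)^{-1}.
\]
Substituting back yields $U(HV^\top+\Delta V^\top)=0$. Since $U$ has full column rank, this forces $\Delta V^\top=-HV^\top$, that is, $\Delta V=-VH^\top$. Conversely, every triple $(0,UH,-VH^\top)$ is annihilated by $D\pi$, because $UHV^\top-UHV^\top=0$. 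As a consistency check, this kernel has dimension $r^2$ (the map $H\mapsto UH$ is injective), which matches the rank-nullity count $n^2+nr+mr-(n^2+\dim\mathcal M_r)=r^2$ from \eqref{eq:rank-manifold-dim}.

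\emph{Equivalence.} We have $\Phi=\Psi\circ\pi$, where $\Psi$ is a polynomial map in $(A,X)$ restricted to the manifold $\mathcal M_r$. The chain rule gives
\[
D\Phi(A,U,V)=D\Psi(A,X_0)\circ D\pi(A,U,V),
\]
and by the image computation $D\pi$ maps onto the domain $\R^{n\times n}\times T_{X_0}\mathcal M_r$ of $D\Psi$. Therefore
\[
\ker D\Phi=D\pi^{-1}\bigl(\ker D\Psi\bigr)\supseteq \ker D\pi.
\]
If (i) holds, then $\ker D\Phi=D\pi^{-1}(0)=\ker D\pi$, which is (ii). Conversely, suppose (ii) holds and take $w\in\ker D\Psi$. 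By surjectivity, $w=D\pi(z)$ for some $z$. Then $D\Phi(z)=D\Psi(w)=0$, so $z\in\ker D\Phi=\ker D\pi$, and hence $w=D\pi(z)=0$. This gives (i).

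The only subtle point is justifying the chain rule when $\Psi$ is defined only on $\mathcal M_r$. I expect this to be harmless: extend $\Psi$ by the same polynomial formula to all of $\R^{n\times n}\times\R^{n\times m}$, and note that $\pi$ maps full-rank factor pairs into $\R^{n\times n}\times\mathcal M_r$. The differential of the restriction is then the restriction of the ambient differential to the tangent space, and the chain rule applies to the extended map.
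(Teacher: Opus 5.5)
Your proposal is correct and follows the paper's proof essentially step for step: the same right-inverse $V(V^\top V)^{-1}$ argument for $\ker D\pi$, and the same chain-rule-plus-surjectivity argument for the equivalence of (i) and (ii). Your extra remarks, namely the $r^2$ dimension count and extending $\Psi$ polynomially to the ambient space to justify the chain rule, are sound additions that the paper leaves implicit.
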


\subsection{Generic rank principle and local identifiability}

Set $d=n^2+r(n+m-r)$ which is the dimension of $\mathbb R^{n\times n}\times\mathcal M_r$. By Proposition~\ref{prop:intrinsic-vs-gauge}, the $r^2$-dimensional
gauge subspace is always contained in the kernel of the Jacobian. Consequently,
\[
\operatorname{rank}D\Phi(A,U,V)
\leq
n^2+nr+mr-r^2=d.
\]
The following result shows that if this maximal rank is attained at one
full-rank factor point, then it is attained generically.

\begin{theorem}
\label{thm:generic-jacobian-rank}

Fix $n,m,r,T$ and the sampling operators $\{S_i\}_{i=1}^m$.
For $\theta=(A,U,V)$, let
\[
J(\theta):=D\Phi(A,U,V).
\]
Assume there exists a parameter point
$\theta_0=(A_0,U_0,V_0)$ such that
\begin{equation}
\label{eq:rank-certificate}
\begin{aligned}
&\operatorname{rank}(U_0)
=
\operatorname{rank}(V_0)
=
r,\\
&\operatorname{rank}J(\theta_0)=n^2+nr+mr-r^2.
\end{aligned}
\end{equation}
Then, for Lebesgue-almost-every \(  \theta=(A,U,V)\) satisfying $\operatorname{rank}(U) = \operatorname{rank}(V)= r,$ we have
\[ \ker J(\theta)  =   \left\{   (0,UH,-VH^\top)  :   H\in\mathbb R^{r\times r}  \right\}.\]
Equivalently, generically, \[ \operatorname{rank}J(\theta)= n^2+nr+mr-r^2, \qquad \dim\ker J(\theta)=r^2.\]
\end{theorem}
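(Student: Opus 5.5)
The plan is the standard algebraic-genericity argument. The entries of $J(\theta)$ are polynomials in the entries of $\theta=(A,U,V)$. This is visible from \eqref{eq:JA_block}--\eqref{eq:JV_block}: each block is built from the matrix powers $A^k$, $A^kUv_i$ and the constant selection matrices $S_i$, $e_i$.

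Let $d'=n^2+nr+mr-r^2$. Consider the polynomial
\[
P(\theta)=\sum_{M}\bigl(\det M(\theta)\bigr)^2,
\]
where the sum runs over all $d'\times d'$ minors $M(\theta)$ of $J(\theta)$. Then $\operatorname{rank}J(\theta)\ge d'$ if and only if $P(\theta)\neq0$. The certificate \eqref{eq:rank-certificate} gives $P(\theta_0)\neq0$, so $P$ is not the zero polynomial on $\R^{n\times n}\times\R^{n\times r}\times\R^{m\times r}$. Its zero set is therefore a proper algebraic subset and has Lebesgue measure zero. The full-rank condition on $U,V$ plays no role in this step; it only restricts to an open set of full measure.

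Next I pass from the rank bound to the kernel. At any $\theta$ with $\operatorname{rank}U=\operatorname{rank}V=r$, Proposition~\ref{prop:intrinsic-vs-gauge} gives
\[
G(\theta):=\set{(0,UH,-VH^\top):H\in\R^{r\times r}}\subseteq\ker J(\theta),
\]
since $\Phi=\Psi\circ\pi$ and $G(\theta)=\ker D\pi$. This subspace has dimension exactly $r^2$, because $H\mapsto UH$ is injective when $U$ has full column rank. Hence $\operatorname{rank}J(\theta)\le d'$ always holds.

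For $\theta$ outside the null set $\{P=0\}$ we have $\operatorname{rank}J(\theta)=d'$. Rank--nullity then gives $\dim\ker J(\theta)=(n^2+nr+mr)-d'=r^2$. Since $G(\theta)$ is an $r^2$-dimensional subspace of this $r^2$-dimensional kernel, the two coincide, which yields the claimed identity and the equivalent rank statement.

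No step is a genuine obstacle. The only points to check carefully are that $J$ depends polynomially on $\theta$, so that the determinant-based function is a genuine polynomial, and that a nonzero polynomial on $\R^N$ has a measure-zero zero set. I would cite the latter as standard, or prove it quickly by induction on the number of variables using Fubini. It is also worth remarking that this makes the exceptional set fit the paper's notion of "generic": it lies in the zero set of the not-identically-zero polynomial $P$.
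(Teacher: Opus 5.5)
Your proof is correct and follows essentially the same argument as the paper. The argument has three parts: polynomial entries of $J(\theta)$, a nonzero polynomial built from $d\times d$ minors whose zero set has measure zero, and the $r^2$-dimensional gauge subspace $\set{(0,UH,-VH^\top)}$. That subspace gives the upper bound $\rank J(\theta)\le d$ and, by a dimension count, equality with the kernel. The only cosmetic difference is that you take the sum of squares of all $d\times d$ minors, whereas the paper uses the single minor that is nonzero at $\theta_0$; either choice works.
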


\begin{remark}
The theorem is conditional on the existence of a maximal-rank
certificate and does not characterize all sampling patterns for which
such a certificate exists. In Section~\ref{sec:basiss}, we verify this
condition for a special sampling configuration containing fully observed basis trajectories under explicit rank conditions. Characterizing more general sampling configurations is left for future work.
\end{remark}

\begin{corollary}
\label{cor:local_identifiability}
Let $\theta_\star=(A_\star,U_\star,V_\star)$ have full-column-rank factors and
suppose
\begin{equation}
\label{eq:pointwise-max-rank}
  \rank J(\theta_\star)=d.
\end{equation}
Then there exists a neighborhood $\mathcal W$ of $\theta_\star$ such that any
$\widetilde\theta=(\widetilde A,\widetilde U,\widetilde V)\in\mathcal W$
with
\[
  \Phi(\widetilde\theta)=\Phi(\theta_\star)
\]
satisfies
\[
  \widetilde A=A_\star,
  \qquad
  (\widetilde U,\widetilde V)
  =(U_\star Q,V_\star Q^{-\top})
\]
for a unique $Q\in GL(r)$ sufficiently close to $I_r$.
\end{corollary}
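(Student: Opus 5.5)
The plan is to reduce the statement to a constant-rank local normal form, using the gauge structure in Proposition~\ref{prop:intrinsic-vs-gauge}. By that proposition and the hypothesis $\rank J(\theta_\star)=d$, the kernel of $J$ at $\theta_\star$ is exactly the gauge space $\{(0,U_\star H,-V_\star H^\top)\}$. This kernel coincides with the tangent space at $\theta_\star$ of the orbit $\mathcal G=\{(A_\star,U_\star Q,V_\star Q^{-\top}):Q\in GL(r)\}$. Since $\Phi$ is constant on $GL(r)$-orbits, the gauge space lies in $\ker J(\theta)$ at every full-rank point $\theta$. Hence $\rank J(\theta)\le d$ everywhere, and by lower semicontinuity of rank, $\rank J(\theta)=d$ on a neighborhood of $\theta_\star$.

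\textbf{Step 1: a slice.} Fix a complement to the orbit directions, for instance
\[
\mathcal L=\{(\Delta A,\Delta U,\Delta V): U_\star^\top \Delta U=0\},
\]
which has codimension $r^2$ and meets the gauge space trivially because $U_\star$ has full column rank. Consider the map
\[
F:GL(r)\times(\theta_\star+\mathcal L)\to\R^{n^2+nr+mr},\qquad F(Q,\theta)=(A,UQ,VQ^{-\top}).
\]
Its differential at $(I_r,\theta_\star)$ is $(H,\ell)\mapsto(0,U_\star H,-V_\star H^\top)+\ell$, which is an isomorphism. The inverse function theorem therefore makes $F$ a local diffeomorphism. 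Consequently, every $\widetilde\theta$ near $\theta_\star$ can be written uniquely as $F(Q,\theta')$ with $Q$ near $I_r$ and $\theta'\in\theta_\star+\mathcal L$ near $\theta_\star$.

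\textbf{Step 2: injectivity on the slice.} Since $\Phi(\widetilde\theta)=\Phi(\theta')$, it suffices to show that $\Phi$ restricted to the slice is injective near $\theta_\star$. On the slice, $D(\Phi|_{\text{slice}})(\theta_\star)=J(\theta_\star)|_{\mathcal L}$. This map is injective because $\mathcal L\cap\ker J(\theta_\star)=\mathcal L\cap\{\text{gauge}\}=\{0\}$. An immersion is locally injective, which follows from the inverse function theorem applied to the map onto its image, or from the estimate $\|\Phi(\theta_1)-\Phi(\theta_2)\|\ge c\|\theta_1-\theta_2\|$ near $\theta_\star$ obtained by Taylor expansion with uniform $C^1$ control. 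It follows that $\theta'=\theta_\star$, and therefore $\widetilde\theta=(A_\star,U_\star Q,V_\star Q^{-\top})$.

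\textbf{Step 3: uniqueness of $Q$.} If $U_\star Q=U_\star Q'$, then $Q=Q'$ because $U_\star$ has full column rank. Thus $Q$ is unique, and it is close to $I_r$ by the construction in Step 1.

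\textbf{Main obstacle.} The delicate point is choosing the neighborhood $\mathcal W$ so that both local statements hold simultaneously. Specifically, $\mathcal W$ must lie inside the image of the neighborhood on which $F$ is a diffeomorphism. In addition, the corresponding slice neighborhood must be contained in the region where $\Phi|_{\text{slice}}$ is injective. Shrinking $\mathcal W$ so that its $F$-preimage lands inside the injectivity region handles this.
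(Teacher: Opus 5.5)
Your proof is correct, but it takes a different route from the paper's.

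\textbf{The paper's route.} The paper first shows that $\rank J=d$ on a whole neighborhood $\mathcal W_0$: a nonvanishing $d\times d$ minor bounds the rank from below, and the $r^2$-dimensional gauge kernel bounds it from above. It then applies the constant-rank theorem to conclude that the local fiber $\Phi^{-1}(\Phi(\theta_\star))\cap\mathcal W_0$ is an $r^2$-dimensional manifold. Finally, it argues that the gauge orbit lies in this fiber, has the same dimension and the same tangent space at $\theta_\star$, and therefore fills the fiber near $\theta_\star$.

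\textbf{Your route.} You instead straighten out the group action. The map $F(Q,\theta)=(A,UQ,VQ^{-\top})$ on $GL(r)\times(\theta_\star+\mathcal L)$ is a local diffeomorphism. Gauge invariance then reduces the problem to $\Phi$ restricted to the slice. On the slice, injectivity of $J(\theta_\star)|_{\mathcal L}$ gives local injectivity.

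\textbf{Comparison.} Your argument uses only the rank at the single point $\theta_\star$, together with $C^1$ regularity of $\Phi$. Consequently, the remark about lower semicontinuity of rank in your opening paragraph is never actually used. You replace the constant-rank theorem with the inverse function theorem plus the elementary local injectivity of an immersion. You also make the final ``orbit equals fiber'' step explicit; the paper only sketches it via a tangent-space and dimension comparison, which needs a further small argument that an equidimensional submanifold contained in the fiber is open in it. As a bonus, your estimate $\|\Phi(\theta_1)-\Phi(\theta_2)\|\ge c\|\theta_1-\theta_2\|$ on the slice is a local stability bound modulo the gauge, which the paper's argument does not give. The paper's version, in turn, is shorter once the constant-rank theorem is taken as a black box, and it needs no explicit choice of transversal slice.
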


Consequently, the pair $(A,X_0)$ is locally identifiable. The only
remaining local nonuniqueness is the change of basis in the factors
$(U,V)$, which leaves $X_0=UV^\top$ unchanged.

\section{Global identifiability and explicit recovery}
\label{sec:global}
The Jacobian criterion established above is local. In this section, we
study global recovery in the special setting where $r$ trajectories are
fully observed and their initial states form a basis for
$\operatorname{range}(X_0)$. We derive necessary and sufficient rank
conditions for global identifiability, provide explicit reconstruction
formulas, and show that the necessary dimension conditions on the observation
horizon and spatial sampling are also generically sufficient.

\subsection{Exact recovery with fully observed basis trajectories}
\label{sec:basiss}

Let $\mathcal I\subseteq[m]$ be an index set of cardinality $r$, and
define
\begin{equation}
\label{eq:basis-B0}
B_0:=X_0[:,\mathcal I]\in\mathbb R^{n\times r}.
\end{equation}
We assume that $\operatorname{rank}(B_0)=r$, so the columns of $B_0$
form a basis for $\operatorname{range}(X_0)$. We refer to the
trajectories indexed by $\mathcal I$ as the basis trajectories and
suppose that they are fully observed, that is,
\[
S_j=I_n,
\qquad j\in\mathcal I.
\]

Since
$\range(X_0)=\range(B_0)$, each initial state has a unique coefficient
$c_i\in\R^r$ such that
\begin{equation}
\label{eq:basis-coeff}
  x_0^{(i)}=B_0c_i.
\end{equation}
For $t=0,\ldots,T$, define
\[
  B_t:=A^tB_0,
\]
and set
\begin{equation}
\label{eq:basis-Kpm}
  K_{B,-}:=\begin{bmatrix}B_0&B_1&\cdots&B_{T-1}\end{bmatrix},
  \qquad
  K_{B,+}:=\begin{bmatrix}B_1&B_2&\cdots&B_T\end{bmatrix}.
\end{equation}
For $i\notin \mathcal I$, define $\Ocal_{i,T}(A,B_0)$ by
\eqref{eq:trajectory-observability}.

Because the basis trajectories are fully observed, the matrices
$B_t$, $K_{B,-}$, and $K_{B,+}$ are directly determined by the
measurements. The following theorem gives necessary and sufficient rank
conditions under which $A$ and all remaining initial states can be
reconstructed uniquely from the observations. In the reconstruction formulas below, $M^\pinv$ denotes the Moore--Penrose pseudoinverse of a matrix $M$.

\begin{theorem}
\label{thm:basis-iff}
\label{ex:basis-sampling}
Let
\[
B_0:=X_0[:,\mathcal I]\in\mathbb R^{n\times r}.
\]
Assume that $\operatorname{rank}(B_0)=r$ and that the trajectories
indexed by $\mathcal I$ are fully observed. Then $(A,X_0)$ is globally
identifiable if and only if
\begin{equation}
\label{eq:basis-rank-conditions}
  \rank K_{B,-}=n,
  \qquad
  \rank\Ocal_{i,T}(A,B_0)=r
  \quad\text{for every }i\notin \mathcal I.
\end{equation}
When these conditions hold, reconstruction is explicit:
\begin{equation}
\label{eq:basis-A-recovery}
  A=K_{B,+}K_{B,-}^\pinv,
\end{equation}
and
\begin{equation}
\label{eq:basis-ci-recovery}
  c_i=
  \Ocal_{i,T}(A,B_0)^\pinv
  \begin{bmatrix}y_0^{(i)}\\ \vdots\\ y_T^{(i)}\end{bmatrix},
  \qquad
  x_0^{(i)}=B_0c_i
  \quad(i\notin \mathcal I).
\end{equation}
\end{theorem}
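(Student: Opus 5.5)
The plan is to prove both directions by working directly with the data of the basis trajectories, which are fully observed. Let $(\widetilde A,\widetilde X_0)$ be any pair with $\rank\widetilde X_0=r$ and $\Psi(\widetilde A,\widetilde X_0)=\Psi(A,X_0)$.

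\emph{Consequences of full observation.} Since $S_j=I_n$ for $j\in\mathcal I$, we get $\widetilde A^t\widetilde X_0[:,\mathcal I]=A^tB_0=B_t$ for all $t=0,\ldots,T$. In particular $\widetilde X_0[:,\mathcal I]=B_0$. This has rank $r$, and $\rank\widetilde X_0=r$, so $\range\widetilde X_0=\range B_0$. Hence every column can be written $\widetilde x_0^{(i)}=B_0\widetilde c_i$ for a unique $\widetilde c_i\in\R^r$, and $\widetilde c_i=c_i$ for $i\in\mathcal I$. From $\widetilde A^{t}B_0=B_t$ and $\widetilde A^{t+1}B_0=B_{t+1}$ we get $\widetilde AB_t=B_{t+1}$ for $t=0,\ldots,T-1$, that is,
\[
\widetilde AK_{B,-}=K_{B,+}=AK_{B,-}.
\]

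\emph{Sufficiency.} If $\rank K_{B,-}=n$, then $K_{B,-}$ has full row rank. Thus $K_{B,-}K_{B,-}^\pinv=I_n$, which gives $\widetilde A=A=K_{B,+}K_{B,-}^\pinv$; this is \eqref{eq:basis-A-recovery}. Now fix $i\notin\mathcal I$. The observations give $S_iA^tB_0\widetilde c_i=y_t^{(i)}=S_iA^tB_0c_i$ for all $t$, so $\Ocal_{i,T}(A,B_0)(\widetilde c_i-c_i)=0$. If $\Ocal_{i,T}(A,B_0)$ has rank $r$, it has full column rank. Then $\widetilde c_i=c_i$, and $c_i$ is given by the left inverse $\Ocal_{i,T}^\pinv$ applied to the stacked data, which is \eqref{eq:basis-ci-recovery}. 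Therefore $\widetilde X_0=X_0$.

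\emph{Necessity.} Here we build explicit distinct pairs with the same observations.
\begin{itemize}
\item Suppose $\rank K_{B,-}<n$. Every column $A^kx_0^{(i)}=A^kB_0c_i$ of $K_-(A,X_0)$ lies in $\range K_{B,-}$, and $K_{B,-}$ is itself a column submatrix of $K_-(A,X_0)$. Hence $\range K_-(A,X_0)=\range K_{B,-}$ and $\rank K_-(A,X_0)<n$. Proposition~\ref{prop:krylov-obstruction} then yields $\widetilde A=A+E\neq A$ with $\widetilde A^tX_0=A^tX_0$ for $t\le T$. So $(\widetilde A,X_0)$ is a distinct pair with identical observations.
\item Suppose $\rank\Ocal_{i,T}(A,B_0)<r$ for some $i\notin\mathcal I$. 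Pick $0\neq z\in\ker\Ocal_{i,T}(A,B_0)$ and replace only column $i$ by $x_0^{(i)}+B_0z$. This changes $X_0$, since $B_0$ has full column rank. All columns still lie in $\range B_0$, and the columns of $B_0$ are untouched, so the rank remains exactly $r$. The observations of trajectory $i$ change by $S_iA^tB_0z=0$, and those of the other trajectories do not change, contradicting global identifiability.
\end{itemize}

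\emph{Main obstacle.} The one delicate point is in the sufficiency step: justifying $\widetilde X_0[:,\mathcal I]=B_0$ and $\range\widetilde X_0=\range B_0$ for an arbitrary competing pair, rather than assuming the competitor shares the basis structure. This is exactly where the rank constraint $\rank\widetilde X_0=r$ in Definition~\ref{def:identifiability} is used. The remaining step, interpreting the pseudoinverses as right and left inverses under the full-rank conditions, is routine.
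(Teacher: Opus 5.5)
Your proposal is correct and follows essentially the same argument as the paper. Full observation of the basis trajectories forces $\widetilde X_0[:,\mathcal I]=B_0$ and $\widetilde AK_{B,-}=K_{B,+}$, and the rank constraint puts every competing column in $\range B_0$. Necessity then comes from perturbing $A$ by some nonzero $E$ with $EK_{B,-}=0$, or perturbing a single $c_i$ along $\ker\Ocal_{i,T}(A,B_0)$. The only cosmetic difference is in the first necessity case. You reduce it to Proposition~\ref{prop:krylov-obstruction} by noting $\range K_-(A,X_0)=\range K_{B,-}$, whereas the paper reruns that induction on $K_{B,-}$ and then extends it to all columns via $x_0^{(i)}=B_0c_i$.
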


\begin{remark}
When the rank conditions in~\eqref{eq:basis-rank-conditions} hold,
$D\Psi(A,X_0)$ is injective. Consequently,
\[
\operatorname{rank}J(A,U,V)
=
n^2+r(n+m-r)
\]
for every full-column-rank factorization $X_0=UV^\top$.
\end{remark}

Note that in this setting, the minimum recovery horizon is the smallest $T\geq1$ for
which the rank conditions in~\eqref{eq:basis-rank-conditions} hold. If
no such finite $T$ exists, then $(A,X_0)$ cannot be globally recovered
over any finite observation horizon.

\subsection{Generic sufficiency of the dimension conditions}

The matrix $K_{B,-}$ has $rT$ columns, while
$\mathcal O_{i,T}(A,B_0)$ has $(T+1)m_{s,i}$ rows. Therefore, the rank
conditions in~\eqref{eq:basis-rank-conditions} require
\[
rT\geq n,
\qquad
(T+1)m_{s,i}\geq r
\quad(i\notin\mathcal I).
\]
The following result shows that these necessary dimension conditions
are also generically sufficient.

\begin{corollary}
\label{cor:generic-basis}
Fix an index set $\mathcal I\subseteq[m]$ of cardinality $r$, and
suppose that the trajectories indexed by $\mathcal I$ are fully
observed. If
\begin{equation}
\label{eq:basis-dimension-conditions}
rT\geq n,
\qquad
(T+1)m_{s,i}\geq r
\quad(i\notin\mathcal I),
\end{equation}
then the rank conditions in~\eqref{eq:basis-rank-conditions} hold
generically among pairs $(A,X_0)$ satisfying
\[
\operatorname{rank}(X_0)
=
\operatorname{rank}\bigl(X_0[:,\mathcal I]\bigr)
=
r.
\]
Consequently, such pairs are globally identifiable and satisfy the
maximal Jacobian-rank condition.
\end{corollary}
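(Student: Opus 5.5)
We need generic rank conditions. The plan is to parametrize the pairs by $(A,B_0,C)$, where $X_0[:,\mathcal I]=B_0$ and the remaining columns are $B_0c_i$. Under the constraint $\operatorname{rank}(X_0)=\operatorname{rank}(X_0[:,\mathcal I])=r$, the set of such pairs is the image of an open dense subset of $\mathbb R^{n\times n}\times\mathbb R^{n\times r}\times\mathbb R^{r\times(m-r)}$. Since the columns outside $\mathcal I$ are free multiples of $B_0$, genericity in these coordinates is the natural meaning of the statement. Each of the rank conditions in \eqref{eq:basis-rank-conditions} depends only on $(A,B_0)$. Each failure is the vanishing of all maximal minors of a matrix whose entries are polynomial in $(A,B_0)$, namely $K_{B,-}$ for the first condition and $\mathcal O_{i,T}(A,B_0)$ for the second. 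The failure set of each condition is therefore a real algebraic set. It is proper, hence of measure zero, as soon as we exhibit a single point $(A,B_0)$ where the rank is attained: some minor is then a nonzero polynomial. Intersecting finitely many generic conditions remains generic. Theorem~\ref{thm:basis-iff} then gives global identifiability, and the remark following it gives the maximal Jacobian rank.

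The first witness is for $\operatorname{rank}K_{B,-}=n$ when $rT\ge n$. Take $A$ to be the cyclic shift $Ae_k=e_{k+1}$ (indices mod $n$), and take $B_0=[e_1,\dots,e_r]$; this requires $r\le n$, which holds. Then the columns of $A^tB_0$ are $e_{tr+1},\dots,e_{tr+r}$ mod $n$, so the blocks for $t=0,\dots,T-1$ sweep consecutive indices covering $rT\ge n$ positions. This hits every coordinate, so $\operatorname{rank}K_{B,-}=n$.

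The second witness is for $\operatorname{rank}\mathcal O_{i,T}(A,B_0)=r$ when $(T+1)m_{s,i}\ge r$. Here $\Omega_i$ is an arbitrary fixed set of size $m_{s,i}\ge1$, and $m_{s,i}\ge1$ is forced by the inequality. Use the same shift $A$ and choose $B_0$ adapted to $\Omega_i$. Concretely, we need $r$ distinct pairs $(k,t)$ with $k\in\Omega_i$ and $0\le t\le T$ such that the coordinates $k-t$ (mod $n$) are distinct. We then let the columns of $B_0$ be the corresponding $e_{k-t}$. Each column is thereby mapped by $S_iA^t$ onto a unit vector, producing an $r\times r$ identity-like submatrix of $\mathcal O_{i,T}$.

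Distinctness is the delicate point when $(T+1)m_{s,i}>n$, since the shifted index sets can overlap. However, the sets $\Omega_i-t$ for $t=0,\dots,T$ have union of size at least $\min\{n,(T+1)m_{s,i}\}$, and $\min\{n,(T+1)m_{s,i}\}\ge r$ since $r\le n$. This union bound needs a short argument. If it fails for the shift, the alternative is to take $A$ to be a generic or companion-type matrix and argue via the polynomial structure directly. A simpler alternative is to keep $B_0=[e_1,\dots,e_r]$ and instead choose $A$ to be a permutation matrix sending $e_{j}$, after $t_j$ steps, into $\Omega_i$ along disjoint paths. I expect constructing this witness for arbitrary $\Omega_i$ to be the main obstacle.

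Note that the witnesses for different conditions may be different points. This is harmless, because each condition is separately generic and their intersection is generic. Finally, $\operatorname{rank}B_0=r$ is itself a generic condition, so the conclusion holds generically among the admissible pairs.
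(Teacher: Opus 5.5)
Your reduction matches the paper's: each rank condition is the nonvanishing of some minor that is polynomial in $(A,B_0)$, so one witness point per condition suffices. The product of finitely many nonzero polynomials is nonzero, and Theorem~\ref{thm:basis-iff} with the remark after it finishes. The gap is in the witnesses, which are the only substantive content of the proof.

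\textbf{Krylov witness.} Your computation does not match your $A$. With $Ae_k=e_{k+1}$ you get $A^tB_0=[e_{t+1},\ldots,e_{t+r}]$, not $[e_{tr+1},\ldots,e_{tr+r}]$. Hence $\operatorname{rank}K_{B,-}=\min\{n,r+T-1\}$, which falls short of $n$ when, for example, $n=10$, $r=5$, $T=2$, even though $rT=n$. This is easily repaired by shifting by $r$: take $Ae_k=e_{k+r}$ mod $n$, or the paper's truncated version $Ae_j=e_{j+r}$ for $j+r\le n$ and $0$ otherwise.

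\textbf{Observability witness.} Here the gap is real. Your union bound $\bigl|\bigcup_{t=0}^{T}(\Omega_i-t)\bigr|\ge\min\{n,(T+1)m_{s,i}\}$ is false, and with the unit shift no $B_0$ can work. Take $n=10$, $r=5$, $T=2$, $\Omega_i=\{1,2\}$, so $(T+1)m_{s,i}=6\ge r$ and $rT\ge n$. Row $k$ of $A^t$ is $e_{k-t}^\top$, so the rows of $\mathcal O_{i,T}(A,I_n)$ span only $e_1,e_2,e_9,e_{10}$. Since $\mathcal O_{i,T}(A,B_0)=\mathcal O_{i,T}(A,I_n)B_0$, its rank is at most $4<r$ for every $B_0$. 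You therefore have no witness for the second condition, and your fallbacks are left unconstructed.

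The missing idea is to build $A$ around $\Omega_i$ rather than use a fixed shift.
\begin{itemize}
\item Let $q=m_{s,i}$ and $c=\min\{q,r\}$. Choose lengths $1\le\ell_a\le T+1$ with $\sum_{a=1}^{c}\ell_a=r$; this is possible because $(T+1)q\ge r$.
\item To each sampled coordinate $\omega_a$ attach a chain $z_{a,0}=e_{\omega_a},z_{a,1},\ldots,z_{a,\ell_a-1}$, where the $z_{a,s}$ with $s\ge1$ are distinct unsampled basis vectors. Only $r-c\le n-q$ of them are needed, and these exist since $r\le n$.
\item Set $Az_{a,s}=z_{a,s-1}$ for $s\ge1$, $Az_{a,0}=0$, and $A=0$ on the remaining basis vectors. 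Let the columns of $B_0$ be the $z_{a,s}$.
\end{itemize}
Then $S_iA^tz_{a,s}$ vanishes for $t\neq s$ and equals $e_a$ at $t=s$. This gives an $r\times r$ identity submatrix of $\mathcal O_{i,T}(A,B_0)$. Your permutation fallback can be made to work the same way, by closing each chain into a cycle, since $A^{t}$ then has a unique preimage of $e_{\omega_a}$. As written, however, it is only a hope rather than an argument.
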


\section{Numerical experiments}
\label{sec:numerical-experiments}
The preceding theory concerns identifiability of the joint recovery problem. In this section, we investigate its computational behavior by directly minimizing the joint observation loss with Adam
\cite{kingma2015adam}. Synthetic experiments examine the effects of the spatial sampling rate and the observation horizon on the recovery of $A$ and $X_0$. A sea-surface-temperature experiment then evaluates the practical relevance of the model by reconstructing unobserved measurements from partial trajectories and
comparing the proposed method with independent low-rank matrix
completion. 


For a prescribed factorization rank $r$, we parameterize the
initial-state matrix as
\[
X_0=UV^\top,
\qquad
U\in\mathbb R^{n\times r},
\qquad
V\in\mathbb R^{m\times r}.
\]
Writing $v_i=V^\top e_i$, we minimize
\begin{equation}
\label{eq:numerical-objective}
\begin{aligned}
\mathcal L(A,U,V)
={}&
\frac{1}{2N_{\mathrm{obs}}}
\sum_{i=1}^m\sum_{t=0}^T
\left\|
S_iA^tUv_i-y_t^{(i)}
\right\|_2^2
+\frac{\lambda_A}{2}\|A\|_F^2  \\
&+
\frac{\lambda_{UV}}{2}
\left(
\|U\|_F^2+\|V\|_F^2
\right),
\end{aligned}
\end{equation}
where
\[
N_{\mathrm{obs}}
=
(T+1)\sum_{i=1}^m m_{s,i}.
\]
Here, small regularization terms are included for numerical stability.

For given $A$, $U$, and $V$, the predicted state matrices are generated recursively by
\[
X_0=UV^\top,
\qquad
X_{t+1}=AX_t,
\qquad
t=0,\ldots,T-1.
\]
Gradients of~\eqref{eq:numerical-objective} are computed through this
recursion, and $A$, $U$, and $V$ are updated jointly using Adam. The
complete procedure is summarized in
Algorithm~\ref{alg:direct-joint-adam}.

Since the objective is nonconvex, the optimization result may depend on the initialization. We therefore run the algorithm from three
independent random initializations in all experiments. For each run, we
retain the iterate with the smallest residual on the observed
measurements. The final estimate is the retained iterate with the
smallest residual among the three runs.

\begin{algorithm}[htbp]
\caption{Direct joint estimation by Adam}
\label{alg:direct-joint-adam}
\small
\begin{algorithmic}[1]
\Require Measurements
$\{(S_i,\{y_t^{(i)}\}_{t=0}^T)\}_{i=1}^m$,
rank $r$, iteration count $K$, and $N_{\mathrm{init}}$ initializations
\For{$s=1,\ldots,N_{\mathrm{init}}$}
    \State Randomly initialize $A$, $U$, and $V$
    \For{$k=1,\ldots,K$}
        \State Set $X_0=UV^\top$
        \For{$t=0,\ldots,T-1$}
            \State $X_{t+1}\gets AX_t$
        \EndFor
        \State Evaluate $\mathcal L(A,U,V)$ in
        \eqref{eq:numerical-objective}
        \State Update $A$, $U$, and $V$ jointly by Adam using
        backpropagation through the recursion
        \State Apply gradient clipping and the prescribed learning-rate
        schedule
        \State \textbf{if } $\rho(A)>\rho_{\max}$ \textbf{ then }
$A\gets \rho_{\max}A/\rho(A)$
        \State Retain the current iterate if its observed-data residual
        is the smallest obtained in this run
    \EndFor
\EndFor
\State Select the retained run having the smallest observed-data
residual
\State \Return $\widehat A$ and
$\widehat X_0=\widehat U\widehat V^\top$
\end{algorithmic}
\end{algorithm}

The spectral-radius rescaling in
Algorithm~\ref{alg:direct-joint-adam} is used only in the synthetic experiments. No such constraint is imposed on the sea-surface-temperature data. The remaining numerical parameters are specified separately in the corresponding experimental
subsections.

\subsection{Synthetic experiments}
\label{subsec:synthetic-experiments}

We generate a synthetic system with
\[
n=10,\qquad m=20,\qquad r=5.
\]
The entries of $A$, $U$, and $V$ are drawn independently from Gaussian
distributions, and $A$ is rescaled so that
\[
\rho(A)=0.8,
\]
where $\rho(A)$ denotes the spectral radius of $A$. This ground-truth pair
is then held fixed throughout all synthetic experiments. The noiseless state
matrices are then generated according to
\[
X_0=UV^\top,
\qquad
X_t=A^tX_0=A^tUV^\top,
\qquad
t=0,\ldots,T.
\]
During optimization, the current estimate of $A$ is rescaled after each
Adam update whenever its spectral radius exceeds $0.99$. Reconstruction
accuracy is measured by the relative errors
\begin{equation}
\label{eq:synthetic-errors}
\operatorname{RE}(A)
=
\frac{\|\widehat A-A\|_F}{\|A\|_F},
\qquad
\operatorname{RE}(X_0)
=
\frac{\|\widehat X_0-X_0\|_F}{\|X_0\|_F}.
\end{equation}

In the first experiment, we fix the observation horizon at $T=7$ and
vary the number of sampled coordinates per trajectory from $m_s=1$ to
$m_s=9$, corresponding to the sampling rates
\[
\frac{m_s}{n}=0.1,0.2,\ldots,0.9.
\]
For each sampling rate, every trajectory is observed on the same set of
$m_s$ coordinates at all observation times. A separate sampling pattern
is generated for each rate, without requiring the patterns at different
rates to be nested. Each pattern provides full spatial coverage across
the collection of trajectories. As shown in
Figure~\ref{fig:synthetic-sampling-rate}, the relative errors of both
$A$ and $X_0$ decrease substantially overall as the sampling rate
increases.

\begin{figure}[t]
\centering
\begin{minipage}{0.48\textwidth}
  \centering
  \includegraphics[width=\linewidth]{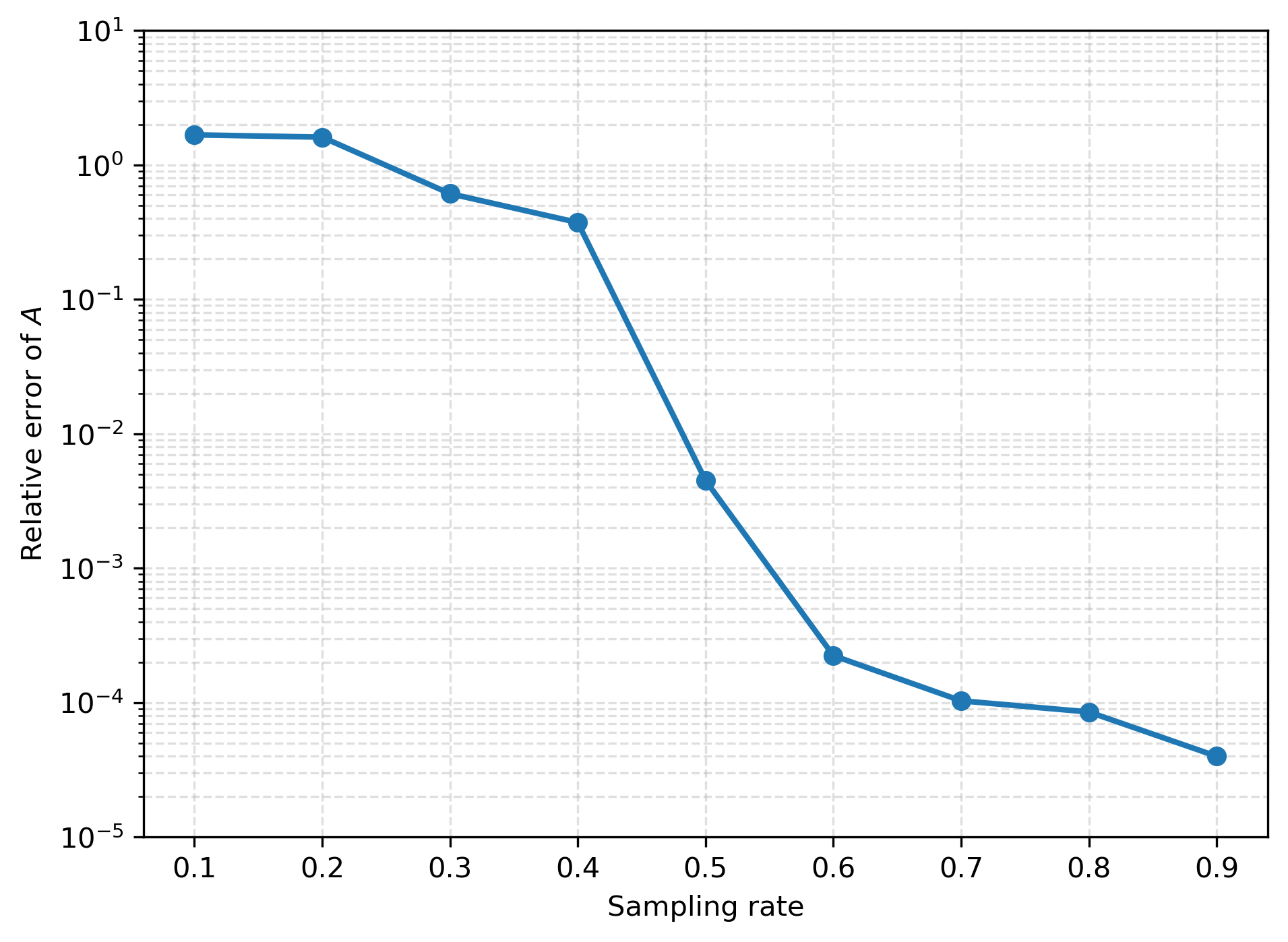}
\end{minipage}\hfill
\begin{minipage}{0.48\textwidth}
  \centering
  \includegraphics[width=\linewidth]{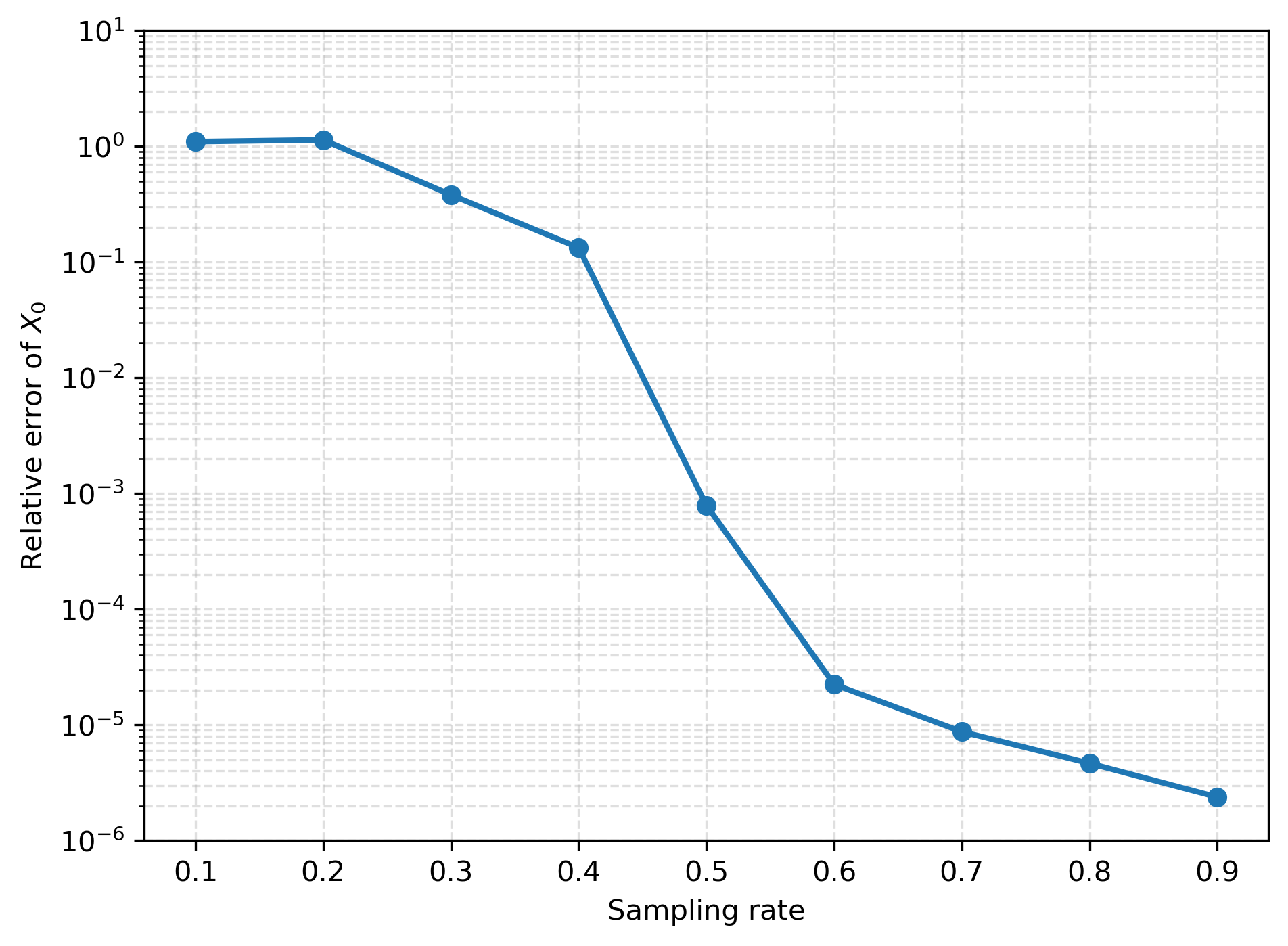}
\end{minipage}
\caption{Relative recovery errors of $A$ (left) and $X_0$ (right) versus spatial sampling rate with $n=10$, $m=20$, $r=5$, and $T=7$.}
\label{fig:synthetic-sampling-rate}
\end{figure}

We next fix the sampling rate at $m_s/n=0.5$, so that each trajectory
is observed at $m_s=5$ coordinates, and vary the observation horizon
from $T=2$ to $T=11$. The sampling locations are held fixed throughout the experiment. Thus, the observation horizon is the only experimental factor being
varied, allowing us to assess the effect of additional temporal observations.

Figure~\ref{fig:synthetic-horizon} shows that the recovery errors of
both $A$ and $X_0$ decrease overall as the observation horizon increases. Although the necessary dimension condition $rT\geq n$
already holds at the shortest horizon, it does not by itself guarantee accurate numerical recovery. The improvement is more pronounced for
shorter horizons and becomes more gradual as additional observation times are included.
\begin{figure}[t]
\centering
\begin{minipage}{0.48\textwidth}
  \centering
  \includegraphics[width=\linewidth]{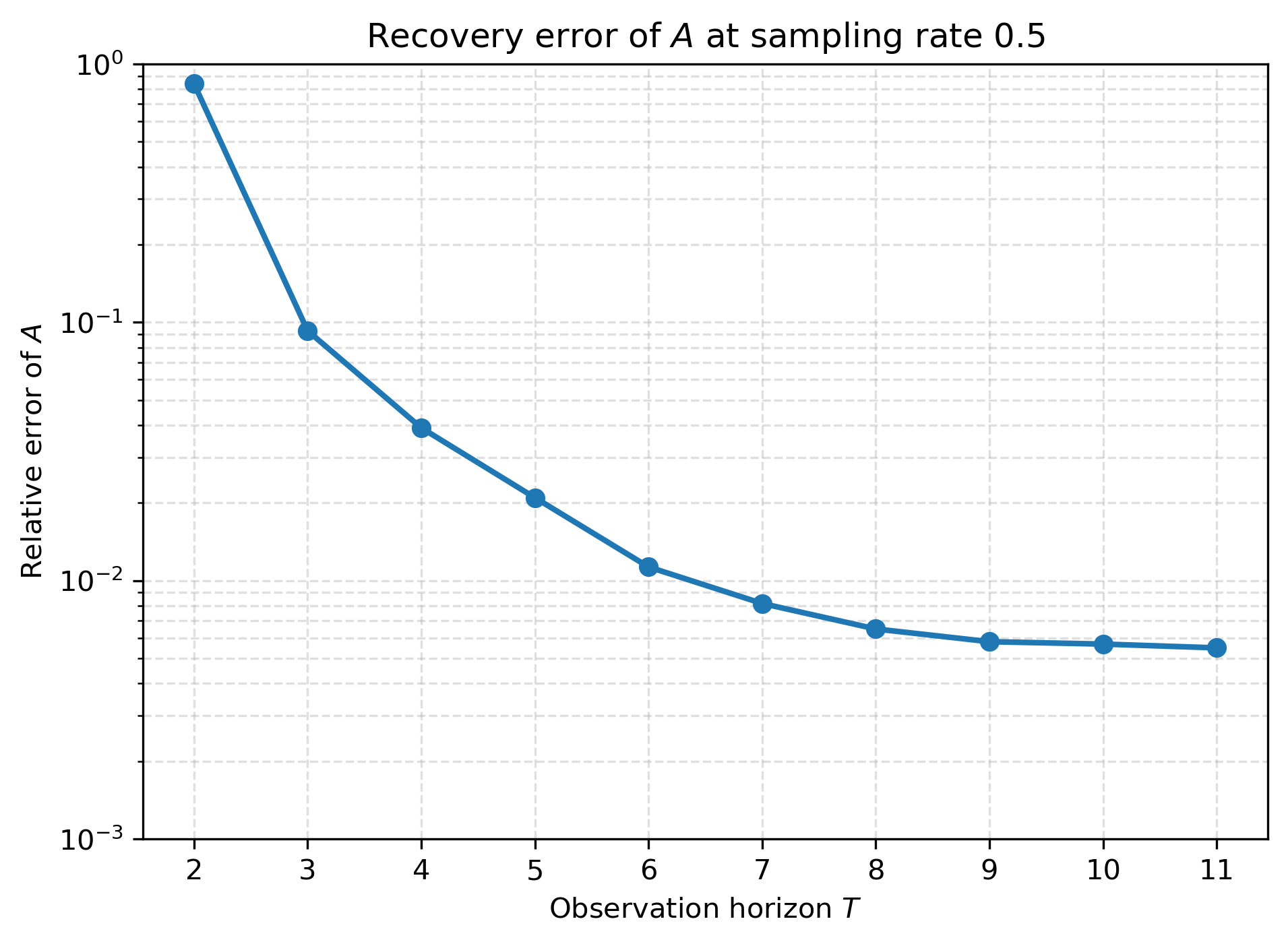}
\end{minipage}\hfill
\begin{minipage}{0.48\textwidth}
  \centering
  \includegraphics[width=\linewidth]{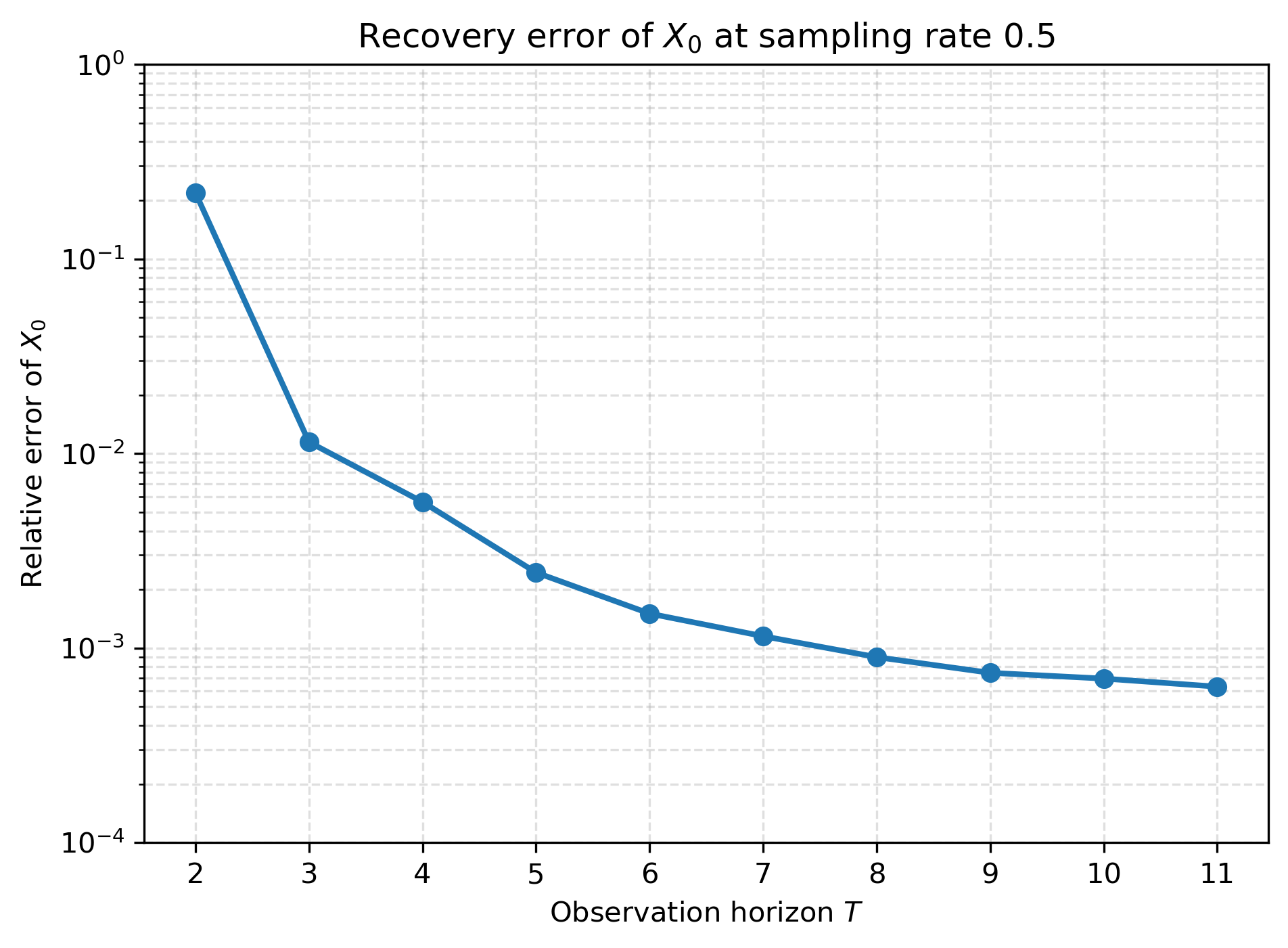}
\end{minipage}
\caption{Relative recovery errors of $A$ (left) and $X_0$ (right)
versus observation horizon $T$ with
$n=10$, $m=20$, $r=5$, and $m_s/n=0.5$.}
\label{fig:synthetic-horizon}
\end{figure}

\subsection{Sea-surface-temperature data}
\label{subsec:temperature-experiment}

We finally evaluate the proposed method on a sea-surface-temperature
data set provided by the NOAA Physical Sciences Laboratory (PSL). The data contain monthly temperature measurements at $100$ spatial
locations over $1733$ months. We select
$n=50$ locations and retain the $100$ consecutive observations from month 11 through month 110.

The selected time interval is partitioned into $m=10$ nonoverlapping
trajectory segments, each containing $10$ consecutive monthly states.
Thus, each segment has observation horizon $T=9$, and the corresponding
initial states occur at months
\[
11,21,31,\ldots,101.
\]
We use the factorization rank $r=6$. To improve numerical conditioning,
the selected temperatures are normalized by the common scale factor
$c=30.46$, their maximum absolute value. This normalization leaves the
relative error unchanged, while errors reported in degrees Celsius are
rescaled by $c$.

We compare the proposed dynamically coupled estimator with independent
factorized low-rank matrix completion. In the baseline method, each
state matrix is represented separately as
\[
X_t=L_tR_t^\top,
\qquad
\operatorname{rank}(X_t)\leq r,
\]
and the factors $L_t$ and $R_t$ are estimated using only the observed
entries of $X_t$. The baseline uses the same prescribed rank and is
also optimized with Adam. However, it does not couple the state
matrices through the relation $X_{t+1}=AX_t$ and does not estimate a
common system matrix $A$.

We consider the sampling rates
\[
\frac{m_s}{n}=0.1,0.2,\ldots,0.9.
\]
At each rate, every trajectory is observed at $m_s$ of the $50$
locations, and its sampling set remains fixed over its ten time
instances. The proposed method and the matrix-completion baseline are
given exactly the same observed entries. The sampling sets used at higher rates are not required to contain those used at lower rates. Let $\Omegaall_{\mathrm{miss}}$ denote the collection of unobserved entries over all trajectories and observation times. We evaluate the reconstruction
accuracy of the two methods on $\Omegaall_{\mathrm{miss}}$ using the mean absolute error
(MAE) and the relative error (RE), defined by
\begin{equation}
\label{eq:temperature-errors}
\operatorname{MAE}
=
\frac{1}{|\Omegaall_{\mathrm{miss}}|}
\sum_{j\in\Omegaall_{\mathrm{miss}}}
|\widehat X_j-X_j|,
\qquad
\operatorname{RE}
=
\left(
\frac{
\sum_{j\in\Omegaall_{\mathrm{miss}}}|\widehat X_j-X_j|^2
}{
\sum_{j\in\Omegaall_{\mathrm{miss}}}|X_j|^2
}
\right)^{1/2}.
\end{equation}

The results are shown in Figure~\ref{fig:temperature-reconstruction}.  At the
two lowest sampling rates, both methods have large errors.  The advantage of
the dynamical model is most pronounced at the intermediate rates
$0.3$--$0.6$.  For example, at rate $0.3$, the proposed method gives an MAE
of $1.001\,{}^\circ\mathrm C$ and an RE of $6.94\%$, compared with
$6.799\,{}^\circ\mathrm C$ and $48.24\%$ for independent matrix completion.
At rate $0.5$, the corresponding values are
$0.624\,{}^\circ\mathrm C$ and $3.86\%$ for the proposed method, versus
$1.604\,{}^\circ\mathrm C$ and $11.97\%$ for matrix completion.

At the higher rates $0.7$--$0.9$, the two methods become comparable, and
independent matrix completion is slightly more accurate.  At rate $0.9$, its
MAE and RE are $0.380\,{}^\circ\mathrm C$ and $2.37\%$, while those of the
proposed method are $0.523\,{}^\circ\mathrm C$ and $3.04\%$.  This behavior
is reasonable: when most entries are directly observed, independent
low-rank completion has sufficient spatial information and does not incur
the modeling error associated with imposing one common linear evolution on
real temperature data.  Overall, the experiment indicates that dynamical
coupling is particularly useful in the intermediate-sampling regime, where
purely spatial low-rank completion remains underdetermined but the temporal
measurements contain substantial additional information.

\begin{figure}[t]
\centering
\begin{minipage}{0.48\textwidth}
  \centering
  \includegraphics[width=\linewidth]{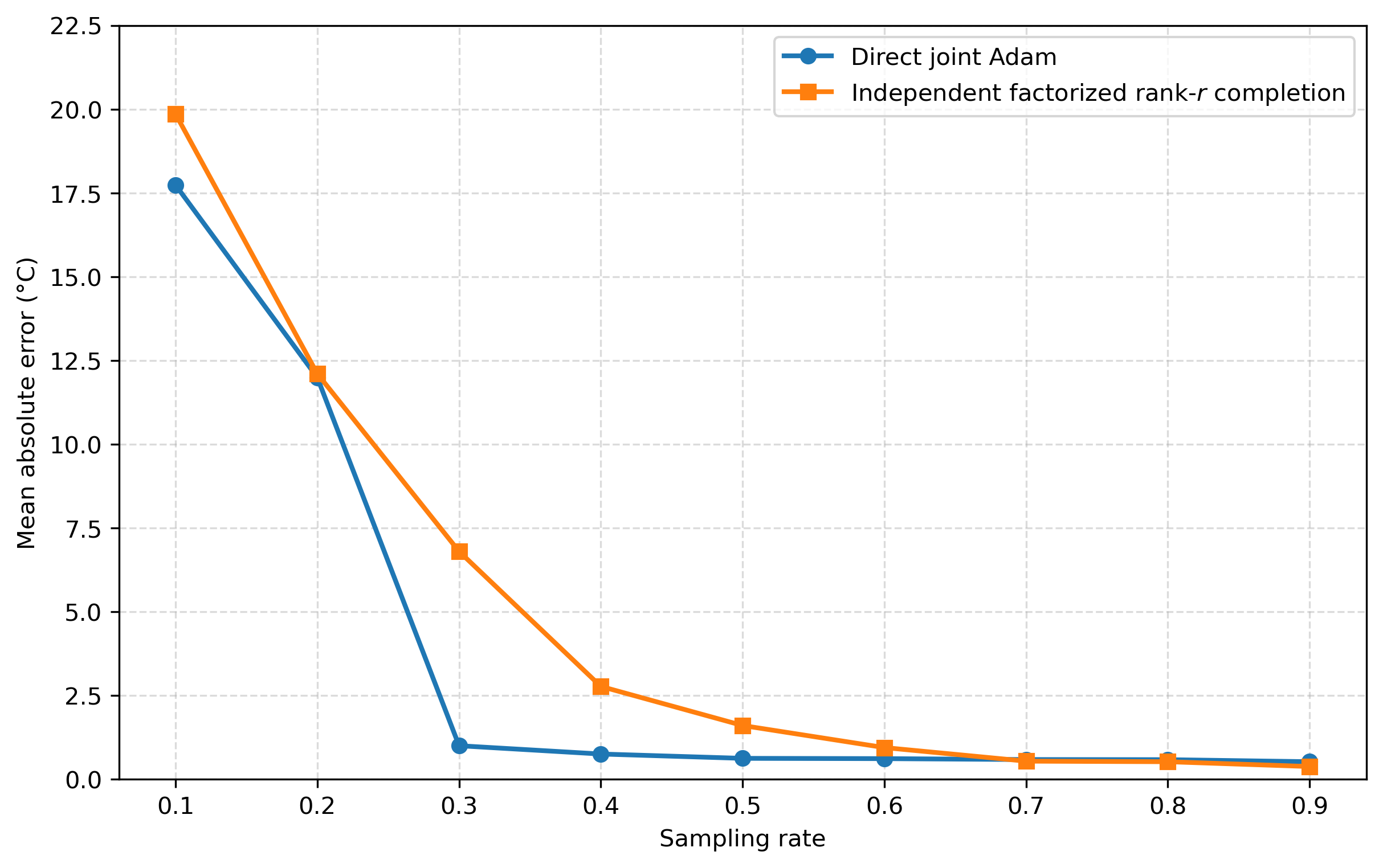}
\end{minipage}\hfill
\begin{minipage}{0.48\textwidth}
  \centering
  \includegraphics[width=\linewidth]{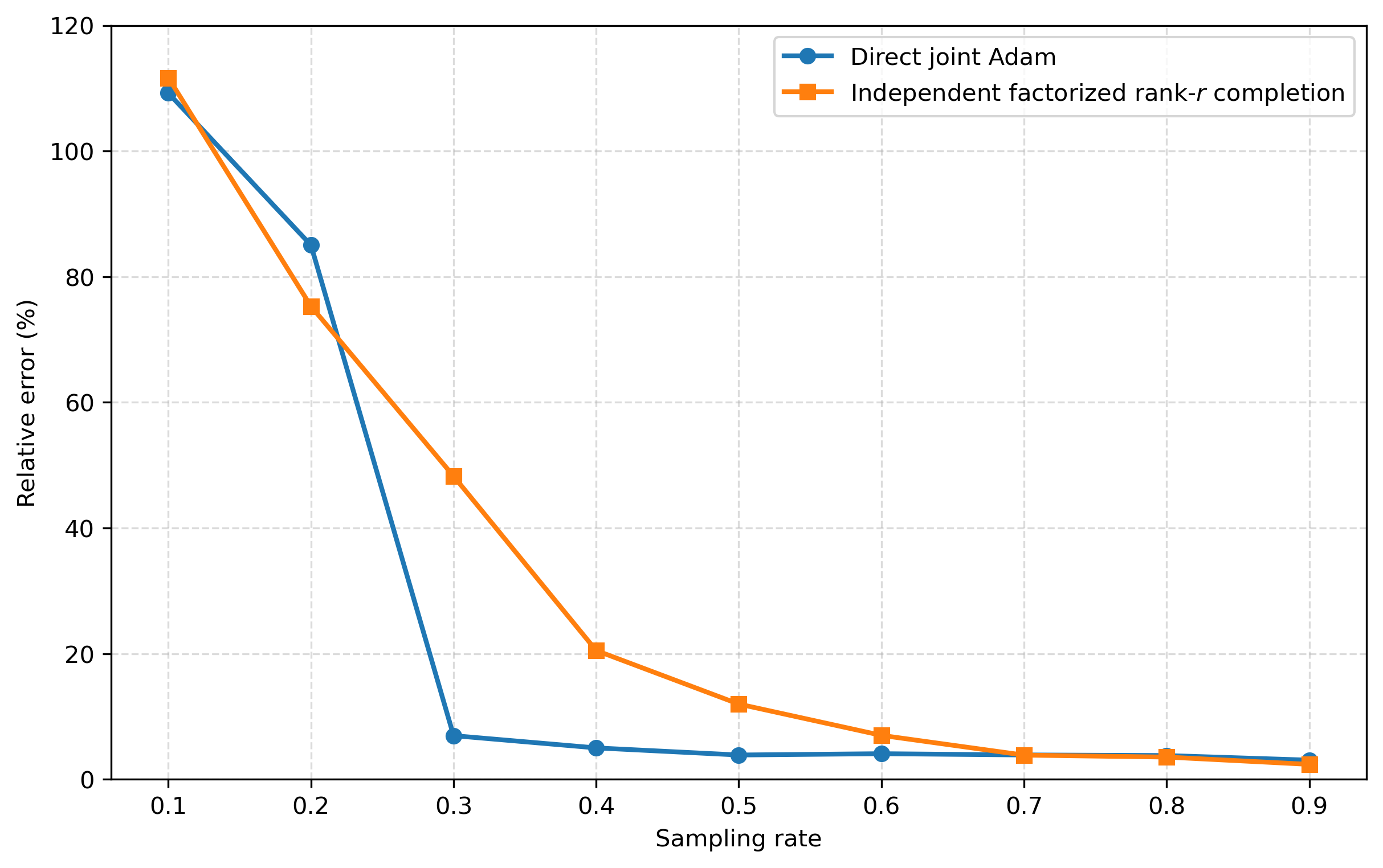}
\end{minipage}
\caption{Reconstruction of unobserved sea-surface temperatures.  Mean
absolute error (left) and relative error (right) versus spatial sampling rate.}
\label{fig:temperature-reconstruction}
\end{figure}

\appendix
\section{Proofs of the necessary conditions}
\label{app:obstructions}
\begin{proof}[Proof of Proposition~\ref{prop:nonuniqueness-fixed-S}]
For any $R\in G_{\mathcal S}$, since $S_iR=S_i$ for every $i$, it follows that
\[
S_i(RAR^{-1})^tRx_0^{(i)}
=
S_iRA^tx_0^{(i)}
=
S_iA^tx_0^{(i)},
\qquad t\geq0.
\]
Hence the pairs
$(RAR^{-1},RX_0)$ and $(A,X_0)$
produce exactly the same observations.

Now suppose that $\ker\mathcal S\neq\{0\}$. Choose
$0\neq u\in\ker\mathcal S$ and $0\neq a\in\mathbb R^n$, and define
\[
R_\varepsilon=I_n+\varepsilon ua^\top.
\]
Furthermore, we have
\[
\det R_\varepsilon
=
1+\varepsilon a^\top u.
\]
Therefore, $R_\varepsilon$ is invertible and different from $I_n$ for
every sufficiently small nonzero $\varepsilon$. Moreover, since
$u\in\ker\mathcal S$, we have $S_i u=0$ for every $i$, and hence
\[
S_iR_\varepsilon
=
S_i+\varepsilon S_iua^\top
=
S_i.
\]
Thus, $R_\varepsilon\in G_{\mathcal S}$ and can be chosen arbitrarily
close to $I_n$.

For row-selection operators, \(\ker \mathcal S=\{0\}\) is equivalent to every coordinate being sampled. If coverage fails, choose $u$ as above. Since
$\rank X_0=r\geq1$, we can choose $a$ with $a^\top X_0\neq0$. Then
\[
R_\eps X_0-X_0=\eps u(a^\top X_0)\neq0.
\]
The transformed pair in
\eqref{eq:similarity-ambiguity} is therefore distinct, has rank-$r$
initial states and identical observations, and converges to $(A,X_0)$
as $\eps\to0$.
\end{proof}

\begin{proof}[Proof of Proposition~\ref{prop:krylov-obstruction}]
Define
\[
\mathcal E
:=
\{E\in\mathbb R^{n\times n}:EK_-=0\}.
\]
Since $\rank K_-<n$, we have $\mathcal E\neq\{0\}$.

Fix $E\in\mathcal E$ and set $\widetilde A=A+E$.  We prove by induction that
\[
  \widetilde A^tX_0=A^tX_0,
  \qquad t=0,\ldots,T.
\]
The assertion is immediate at $t=0$.  If it holds at some $t\leq T-1$, then
$EA^tX_0=0.$ Therefore
\[
  \widetilde A^{t+1}X_0
  =(A+E)\widetilde A^tX_0
  =(A+E)A^tX_0
  =A^{t+1}X_0.
\]
Every nonzero $E\in\mathcal E$ gives a distinct transition matrix with the same complete trajectories.
\end{proof}

\begin{proof}[Proof of Theorem~\ref{thm:necessary-laws}]
Since $D\Psi(A,X_0)$ is injective, the dimension of its domain cannot
exceed that of its codomain. Hence
\[
(T+1)\sobs\geq n^2+r(n+m-r).
\]

Next, restrict $D\Psi(A,X_0)$ to perturbations of the form
$(\Delta A,0)$. This restricted map must also be injective. At time
$t=0$, the measurements do not depend on $A$, so their derivative with
respect to $A$ is zero. Hence, information about $A$ is provided only
by the measurements at times $t=1,\ldots,T$. These measurements contain
$T\sobs$ scalar values, whereas $\Delta A$ has $n^2$ degrees of freedom.
Injectivity therefore requires
\[
T\sobs\geq n^2.
\]

Write $X_0=B_0C^\top$ for some $C\in\mathbb R^{m\times r}$. Suppose that $\rank\mathcal O_{i,T}(A,B_0)<r$ for some $i\in[m]$. Then there
exists $0\neq h\in\mathbb R^r$ such that
\[
\mathcal O_{i,T}(A,B_0)h=0.
\]
Set
\[
\Delta X=B_0he_i^\top.
\]
This changes only the $i$-th initial state. Since
$\mathcal O_{i,T}(A,B_0)h=0$, the resulting change in its observations
is zero at every time. Thus
\[
D\Psi(A,X_0)[0,\Delta X]=0.
\]
Moreover, $\Delta X\neq0$ because $B_0$ has full column rank and
$h\neq0$. This contradicts the injectivity of $D\Psi(A,X_0)$.
Therefore,
\[
\rank\mathcal O_{i,T}(A,B_0)=r
\qquad\text{for every }i\in[m].
\]
Since $\mathcal O_{i,T}(A,B_0)$ has $(T+1)m_{s,i}$ rows, this requires $(T+1)m_{s,i}\geq r$ for every $i\in[m].$
\end{proof}

\section{Proofs of the Jacobian and local results}
\label{app:jacobian}
\begin{proof}[Proof of Proposition~\ref{prop:intrinsic-vs-gauge}]
Formula \eqref{eq:Dpi} follows from the product rule, and surjectivity onto the
tangent space follows from \eqref{eq:tangent-rank-manifold}.  It remains to
identify the kernel.  Suppose
\[
  \Delta UV^\top+U\Delta V^\top=0.
\]
Let $R_V=V(V^\top V)^{-1}$, so $V^\top R_V=I_r$.  Multiplying on the
right by $R_V$ gives
\[
  \Delta U+U\Delta V^\top R_V=0.
\]
Thus $\Delta U=UH$ with $H=-\Delta V^\top R_V$.  Substitution into the
original equality yields
\[
  U(HV^\top+\Delta V^\top)=0.
\]
Since $U$ has full column rank,
$\Delta V^\top=-HV^\top$, or $\Delta V=-VH^\top$.  The converse is
immediate.

For the equivalence, use $D\Phi=D\Psi\circ D\pi$.
If (i) holds and $D\Phi[\Delta A,\Delta U,\Delta V]=0$, then
\[
  D\Psi\paren{D\pi[\Delta A,\Delta U,\Delta V]}=0.
\]
Injectivity gives $D\pi[\Delta A,\Delta U,\Delta V]=0$, and
\eqref{eq:ker-Dpi} gives the gauge form.

Conversely, suppose (ii) holds and
$D\Psi(A,X_0)[\Delta A,\Delta X]=0$ with
$\Delta X\in T_{X_0}\mathcal M_r$.  By surjectivity of $D\pi$, choose
$\Delta U,\Delta V$ with
$\Delta X=\Delta UV^\top+U\Delta V^\top$.  Then
$(\Delta A,\Delta U,\Delta V)\in\ker D\Phi$, so (ii) implies
$\Delta A=0$ and $\Delta X=0$.
\end{proof}

\begin{proof}[Proof of Theorem~\ref{thm:generic-jacobian-rank}]
By Proposition~\ref{prop:intrinsic-vs-gauge},
\[
\mathcal G(\theta)
:=
\ker D\pi(\theta)
=
\left\{
(0,UH,-VH^\top):
H\in\mathbb R^{r\times r}
\right\}
\subseteq
\ker J(\theta).
\]
Since $\mathcal G(\theta)$ has dimension $r^2$, 
\[
\rank J(\theta)
\leq
n^2+nr+mr-r^2
=
d.
\]

Since $\Phi$ is a polynomial map, the entries of $J(\theta)$ are
polynomials in $\theta=(A,U,V)$. By the rank-certificate assumption,
\[
\rank J(\theta_0)=d.
\]
Hence, some $d\times d$ minor of $J(\theta_0)$ has a nonzero
determinant. Denote the corresponding determinant by $p(\theta)$.
Then $p$ is a polynomial and
\[
p(\theta_0)\neq0,
\]
so $p$ is not identically zero.

The zero set of a nonzero polynomial has Lebesgue measure zero.
Therefore, for almost every parameter point $\theta$, we have
$p(\theta)\neq0$. At each such point, the corresponding minor is
nonsingular, and hence
\[
\rank J(\theta)\geq d.
\]
Together with the upper bound $\rank J(\theta)\leq d$, this gives
\[
\rank J(\theta)=d,
\qquad
\dim\ker J(\theta)=r^2.
\]
Since $\mathcal G(\theta)\subseteq\ker J(\theta)$ and both spaces have
dimension $r^2$, they must be equal.
\end{proof}

\begin{proof}[Proof of Corollary~\ref{cor:local_identifiability}]
Since $\rank D\Phi(\theta_\star)=d$, some $d\times d$ minor of
$D\Phi$ is nonzero at $\theta_\star$. By continuity, this minor remains nonzero in a neighborhood $\mathcal W_0$ of $\theta_\star$. By shrinking $\mathcal W_0$ if necessary, we ensure that $U$ and $V$ remain full column rank throughout $\mathcal W_0$. For every $\theta\in\mathcal W_0$, the gauge kernel has dimension
$r^2$, so $\rank D\Phi(\theta)\leq d$. The nonzero minor gives the reverse inequality.
Hence,
\[
\rank D\Phi=d
\qquad\text{on }\mathcal W_0.
\]

By the constant-rank theorem, the local fiber
\[
\mathcal F_\star
:=
\Phi^{-1}(\Phi(\theta_\star))\cap\mathcal W_0
\]
is an $r^2$-dimensional manifold. It contains the
$r^2$-dimensional gauge orbit
\[
\mathcal O_\star
:=
\left\{
(A_\star,U_\star Q,V_\star Q^{-\top}):
Q\in\mathrm{GL}(r)
\right\}
\cap\mathcal W_0.
\]
Moreover,
\[
T_{\theta_\star}\mathcal O_\star
=
\ker D\Phi(\theta_\star)
=
T_{\theta_\star}\mathcal F_\star.
\]
The inverse function theorem therefore implies that
$\mathcal O_\star$ and $\mathcal F_\star$ coincide locally near
$\theta_\star$.

Consequently, every nearby point with the same measurements has the
form
\[
\widetilde A=A_\star,
\qquad
(\widetilde U,\widetilde V)
=
(U_\star Q,V_\star Q^{-\top})
\]
for a unique $Q\in\mathrm{GL}(r)$ close to $I_r$. The uniqueness of
$Q$ follows from the full column rank of $U_\star$.
\end{proof}

\section{Proofs for fully observed basis trajectories}
\label{app:basiss}
\begin{proof}[Proof of Theorem~\ref{thm:basis-iff}]
We first prove sufficiency and derive the reconstruction formulas. Let
$(\widetilde A,\widetilde X_0)$ be any rank-$r$ pair producing the same
observations as $(A,X_0)$. Since the trajectories indexed by
$\mathcal I$ are fully observed,
\[
\widetilde A^t\widetilde X_0[:,\mathcal I]
=
A^tX_0[:,\mathcal I]
=
B_t,
\qquad t=0,\ldots,T.
\]
Hence,
\[
K_{B,+}
=
AK_{B,-}
=
\widetilde A K_{B,-}.
\]
Since $K_{B,-}$ has full row rank,
\[
\widetilde A
=
A
=
K_{B,+}K_{B,-}^{\dagger}.
\]

At $t=0$, the fully observed basis columns give
\[
\widetilde X_0[:,\mathcal I]=B_0.
\]
Since $\rank B_0=\rank\widetilde X_0=r$, every column of
$\widetilde X_0$ belongs to $\range(B_0)$. Thus, for each
$i\notin\mathcal I$, there exists a unique
$\widetilde c_i\in\mathbb R^r$ such that
\[
\widetilde x_0^{(i)}=B_0\widetilde c_i.
\]
The observations of trajectory $i$ satisfy
\[
\mathcal O_{i,T}(A,B_0)\widetilde c_i
=
\begin{bmatrix}
y_0^{(i)}\\
\vdots\\
y_T^{(i)}
\end{bmatrix}
=
\mathcal O_{i,T}(A,B_0)c_i.
\]
Since $\mathcal O_{i,T}(A,B_0)$ has full column rank,
\[
\widetilde c_i=c_i
=
\mathcal O_{i,T}(A,B_0)^\dagger
\begin{bmatrix}
y_0^{(i)}\\
\vdots\\
y_T^{(i)}
\end{bmatrix}.
\]
Therefore, $\widetilde X_0=X_0$. Together with
$\widetilde A=A$, this proves global uniqueness.

We now prove necessity. Suppose first that
\[
s:=\rank K_{B,-}<n.
\]
Then there exists a nonzero matrix $E$ such that $EK_{B,-}=0.$ As in the proof of Proposition~\ref{prop:krylov-obstruction}, this
implies
\[
(A+E)^tB_0=A^tB_0,
\qquad t=0,\ldots,T.
\]
Since every initial state has the form $x_0^{(i)}=B_0c_i$, it follows
that
\[
(A+E)^tx_0^{(i)}=A^tx_0^{(i)}
\]
for every $i$ and $t=0,\ldots,T$. Thus, $A+E$ and $A$ produce the same
complete trajectories. 

Next, suppose that
\[
s_i:=\rank\mathcal O_{i,T}(A,B_0)<r
\]
for some $i\notin\mathcal I$. Choose
$0\neq h\in\ker\mathcal O_{i,T}(A,B_0)$ and replace $c_i$ by $c_i+h.$ Then the new initial state is $B_0(c_i+h)$, and its observations remain
unchanged because
\[
\mathcal O_{i,T}(A,B_0)h=0.
\]
All other initial states are left unchanged. The resulting initial-state
matrix still has rank $r$, since its basis columns remain equal to
$B_0$. Hence both rank conditions are necessary.

Under the full-rank conditions, the reconstruction formulas define a
smooth local map $\mathcal R$ satisfying
\[
\mathcal R\circ\Psi=\operatorname{id}.
\]
Differentiating this identity shows that $D\Psi(A,X_0)$ is injective.
Proposition~\ref{prop:intrinsic-vs-gauge} therefore gives
\[
\rank J(A,U,V)=n^2+r(n+m-r).
\]
\end{proof}

\begin{lemma}
\label{lem:generic-Krylov}
Suppose that $rT\geq n$. Then, for generic
$(A,B_0)\in\mathbb R^{n\times n}\times\mathbb R^{n\times r}$ with
$\rank B_0=r$,
\[
\rank
\begin{bmatrix}
B_0&AB_0&\cdots&A^{T-1}B_0
\end{bmatrix}
=n.
\]
\end{lemma}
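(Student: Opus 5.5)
The plan is the standard polynomial-genericity argument: a single explicit witness followed by a nonvanishing-minor argument, as in the proof of Theorem~\ref{thm:generic-jacobian-rank}. The entries of $K(A,B_0):=[B_0,AB_0,\ldots,A^{T-1}B_0]$ are polynomials in the entries of $(A,B_0)$. The condition $\rank K(A,B_0)<n$ holds exactly when every $n\times n$ minor of $K$ vanishes. It therefore suffices to exhibit one pair $(A_\star,B_\star)$ with $\rank B_\star=r$ and $\rank K(A_\star,B_\star)=n$. Then some $n\times n$ minor $p(A,B_0)$ satisfies $p(A_\star,B_\star)\neq0$, so $p$ is a nonzero polynomial. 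Its zero set contains the failure set and has Lebesgue measure zero. Intersecting with the open dense set $\{\rank B_0=r\}$ gives the claim in the sense of genericity defined in Section~\ref{setting}.

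The main step is the witness construction, which I would do with a cyclic shift. Take $A_\star=Z$, the cyclic (or nilpotent) shift with $Ze_k=e_{k+1}$ for $k<n$, and $Ze_n=e_1$ if cyclic. Take $B_\star=[e_1,e_2,\ldots,e_r]$, which requires $r\le n$; this holds since $r\le\min\{n,m\}$. Then $A_\star^kB_\star=[e_{1+k},\ldots,e_{r+k}]$, with indices taken cyclically. The block $A_\star^kB_\star$ therefore contributes the consecutive vectors $e_{kr'+1},\ldots$, where successive blocks overlap. A cleaner choice is to use the shift by $r$ directly: let $A_\star$ map $e_k\mapsto e_{k+r}$ for $k+r\le n$ and $e_k\mapsto 0$ otherwise. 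Then $A_\star^kB_\star=[e_{kr+1},\ldots,e_{kr+r}]$, with columns beyond $n$ read as zero. The columns of $K(A_\star,B_\star)$ over $k=0,\ldots,T-1$ then include $e_1,\ldots,e_{\min(rT,n)}$. Since $rT\ge n$, all of $e_1,\ldots,e_n$ appear, so $\rank K=n$. Clearly $\rank B_\star=r$. If one prefers to avoid zero columns, the plain shift $e_k\mapsto e_{k+1}$ also works: $A^kB_\star$ contains $e_{k+r}$, so the union over $k\le T-1$ covers $e_1,\ldots,e_{r+T-1}$. However, $r+T-1\ge n$ is weaker than needed only in some cases, so I would use the shift-by-$r$ construction, which uses exactly $rT\ge n$.

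I do not expect a real obstacle. The only subtle point is that genericity is asserted on the subset $\{\rank B_0=r\}$. I would remark that this subset is the complement of a proper algebraic set, so it has full measure. The exceptional set within it is contained in $\{p=0\}$, which matches the paper's definition of ``generic'' (failure contained in the zero set of a polynomial not identically zero).
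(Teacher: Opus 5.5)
Your proof is correct and matches the paper's argument: you use the same shift-by-$r$ witness $Ae_j=e_{j+r}$ (zero when $j+r>n$) with $B_0=[e_1,\ldots,e_r]$, so that the Krylov columns contain $e_1,\ldots,e_n$ once $rT\ge n$, and then note that a nonvanishing $n\times n$ minor is a nonzero polynomial. One correction to your discarded aside: the plain shift $e_k\mapsto e_{k+1}$ does not work in general, because it needs $r+T-1\ge n$, which is not implied by $rT\ge n$ (since $rT-(r+T-1)=(r-1)(T-1)\ge 0$), and it fails for example at $r=2$, $T=5$, $n=10$.
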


\begin{proof}
It is enough to construct one pair for which the Krylov matrix has full
row rank. Set $B_0=[e_1,\ldots,e_r]$
and define
\[
Ae_j=
\begin{cases}
e_{j+r},&j+r\leq n,\\
0,&j+r>n.
\end{cases}
\]
Then
\[
A^te_j=e_{j+tr}
\]
whenever $j+tr\leq n$. Since $rT\geq n$, the columns of the Krylov
matrix include $e_1,\ldots,e_n$. Hence one of its $n\times n$ minors
is nonzero. Since this minor is a polynomial in the entries of
$(A,B_0)$, it is generically nonzero.
\end{proof}

\begin{lemma}
\label{lem:generic-observability}
Let $S\in\mathbb R^{q\times n}$ be a row-selection matrix. If $(T+1)q\geq r,$ then for generic $(A,B_0)$ with $\rank B_0=r$,
\[
\rank
\begin{bmatrix}
SB_0\\
SAB_0\\
\vdots\\
SA^TB_0
\end{bmatrix}
=r.
\]
\end{lemma}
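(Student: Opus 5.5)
Following the proof of Lemma~\ref{lem:generic-Krylov}, it suffices to exhibit one pair $(A,B_0)$ with $\rank B_0=r$ for which the observability matrix has full column rank $r$. Some $r\times r$ minor of the $(T+1)q\times r$ matrix is then nonzero at that point. That minor is a polynomial in the entries of $(A,B_0)$, so it is not identically zero, and its zero set has measure zero. The full-rank condition on $B_0$ is itself generic (it is the nonvanishing of some $r\times r$ minor of $B_0$), so the statement follows. We may assume $q\geq1$, since otherwise $(T+1)q\geq r\geq1$ fails.

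The construction is the transpose of the shift used in Lemma~\ref{lem:generic-Krylov}, arranged so that the sampled coordinates sweep out new directions over time. After a permutation of coordinates (relabel so that $S$ selects $e_1,\ldots,e_q$), it suffices to work with $S=[I_q\ 0]$. Permuting coordinates changes $(A,B_0)$ by a similarity $(PAP^{-1},PB_0)$, which preserves genericity. If $r\le n$ is the only constraint and $q\ge r$, take $B_0=[e_1,\ldots,e_r]$ and any $A$; then $SB_0$ already contains $I_r$ as a submatrix, and we are done.

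In general, define $A$ by $Ae_{j+q}=e_j$ for $j+q\le n$ and $Ae_j=0$ for $j>n-q$. Then $A^te_{j+tq}=e_j$ for $1\le j\le q$ whenever $j+tq\le n$, which means $SA^t$ selects the coordinates $tq+1,\ldots,tq+q$ (truncated at $n$). Take $B_0=[e_1,\ldots,e_r]$. The stacked rows $SA^tB_0$, $t=0,\ldots,T$, then contain the rows $e_k^\top$ restricted to the first $r$ columns for all $k\le\min\{n,(T+1)q\}$. Since $(T+1)q\ge r$ and $r\le n$, this includes every $k\le r$, so these rows contain the identity $I_r$. Hence the observability matrix has rank $r$ at this point.

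The only step needing care is checking the indexing of the down-shift, so that $SA^t$ really picks out the block of coordinates $tq+1,\ldots,tq+q$; the shift must map later coordinates toward the sampled ones, not the other way. Beyond that there is no real obstacle. One small point: the genericity is over pairs with $\rank B_0=r$, which form a nonempty Zariski-open set, so a nonzero polynomial on the ambient space remains generically nonzero there.
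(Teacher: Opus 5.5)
Your proposal is correct and is essentially the paper's argument. The paper exhibits one witness point at which a nilpotent shift feeds unsampled coordinates into the sampled ones, so the stacked observability matrix contains an $r\times r$ identity block, and then applies the nonzero-minor genericity step; after your permutation to $S=[I_q\ 0]$, your stride-$q$ shift with $B_0=[e_1,\ldots,e_r]$ is just the special case $z_{a,s}=e_{a+sq}$ of the paper's chains $z_{a,0},\ldots,z_{a,\ell_a-1}$.

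One index slip needs fixing: the clause ``$Ae_j=0$ for $j>n-q$'' contradicts $Ae_{j+q}=e_j$, because it describes the zero \emph{rows} of $A$. The zero columns are $e_j$ with $j\le q$, and that is what your later computation $A^te_{j+tq}=e_j$ and your claim that $SA^t$ selects coordinates $tq+1,\ldots,tq+q$ actually use.
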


\begin{proof}
Let $\omega_1,\ldots,\omega_q$ denote the sampled coordinates, so that
\[
Se_{\omega_a}=e_a,
\qquad a=1,\ldots,q.
\]
Set $c=\min\{q,r\}$ and choose integers
\[
1\leq \ell_a\leq T+1,
\qquad
\sum_{a=1}^c\ell_a=r.
\]
Such a choice is possible because $(T+1)q\geq r$. For each $a=1,\ldots,c$, choose standard basis vectors
\[
z_{a,0},\ldots,z_{a,\ell_a-1},
\qquad
z_{a,0}=e_{\omega_a},
\]
such that all the vectors $z_{a,s}$ are mutually distinct and
\[
Sz_{a,s}=0,
\qquad s\geq1.
\]

Define
\[
Az_{a,0}=0,
\qquad
Az_{a,s}=z_{a,s-1}\quad(s\geq1),
\]
and set $A$ to zero on the remaining standard basis vectors. Let the
columns of $B_0$ be the $r$ vectors $z_{a,s}$. Since these are distinct
standard basis vectors, $\rank B_0=r.$

For each $z_{a,s}$, the state reaches the sampled vector
$e_{\omega_a}$ exactly at time $s$. Hence, for $t=0,\ldots,T$,
\[
SA^tz_{a,s}
=
\begin{cases}
e_a,&t=s,\\
0,&t\neq s.
\end{cases}
\]
Therefore, the rows indexed by the pairs $(a,s)$ and the columns
indexed by $z_{a,s}$ form an $r\times r$ identity submatrix of the
stacked observability matrix. Its determinant is a nonzero polynomial
in the entries of $(A,B_0)$ and is therefore generically nonzero.
\end{proof}

\begin{proof}[Proof of Corollary~\ref{cor:generic-basis}]
For the fixed index set $\mathcal I$, the condition $\rank B_0=r$ is
generic. By Lemma~\ref{lem:generic-Krylov}, a maximal minor of the
Krylov matrix is a nonzero polynomial. By
Lemma~\ref{lem:generic-observability}, for each
$i\notin\mathcal I$, some $r\times r$ minor of
$\mathcal O_{i,T}(A,B_0)$ is also a nonzero polynomial.

The product of these finitely many polynomials is not identically zero.
Hence, outside its zero set, all the required rank conditions hold
simultaneously. The conclusions now follow from
Theorem~\ref{thm:basis-iff}.
\end{proof}

\section*{Declarations}

\subsection*{Conflict of Interest}
The author declares that there is no conflict of interest.

\subsection*{Funding}
The author received no financial support for this article.

\subsection*{Author Contributions}
Not applicable. This is a single-author manuscript.

\subsection*{Acknowledgements}
Not applicable.

\printbibliography
\end{document}